\title{Typed SLD-Resolution: Dynamic Typing for Logic Programming}
\titlerunning{Typed SLD Resolution}
\author{João Barbosa \and
Mário Florido \and
Vítor Santos Costa}
\authorrunning{J. Barbosa et. al.}
\institute{LIACC, INESC, Dep. de Ciência de Computadores \\
Faculdade de Ciências \\Universidade do Porto, Porto\\ \email{\{joao.barbosa,amflorido,vscosta\}@fc.up.pt}}
\begin{document}

\maketitle

\begin{abstract}
The semantic foundations for logic programming are usually separated into two different approaches. The operational semantics, which uses SLD-resolution, the proof method that computes answers in logic programming, and the declarative semantics, which sees logic programs as formulas and its semantics as models. Here, we define a new operational semantics called TSLD-resolution, which stands for Typed SLD-resolution, where we include a value “wrong”, that corresponds to the detection of a type error at run-time.  For this we define a new typed unification algorithm. Finally we prove the correctness of TSLD-resolution with respect to a typed declarative semantics.

\keywords{Logic Programming \and Operational Semantics \and Types}
\end{abstract}

\section{Introduction}

Types play an important role in the verification and debugging of
programming languages, and have been the subject of significant
research in the logic programming community \cite{DBLP:conf/iclp/Zobel87,DBLP:books/mit/pfenning92/DartZ92,DBLP:journals/corr/cs-LO-9810001,DBLP:conf/lics/FruhwirthSVY91,DBLP:books/mit/pfenning92/YardeniFS92,DBLP:journals/ai/MycroftO84,DBLP:conf/slp/LakshmanR91,DBLP:conf/lopstr/SchrijversBG08,Drabent02,DBLP:conf/iclp/SchrijversCWD08,DBLP:books/crc/chb/Hanus14,BarbosaFloridoCosta19,BarbosaFloridoCosta21}. Most research has been driven by the desire to perform compile-time
checking. One important line of this work views types as
approximation of the program semantics \cite{DBLP:conf/iclp/Zobel87,DBLP:books/mit/pfenning92/DartZ92,DBLP:books/mit/pfenning92/YardeniFS92,DBLP:conf/iclp/BruynoogheJ88,DBLP:conf/lics/FruhwirthSVY91}. A different approach relies on asking the user to provide the
type information, thus filtering the set of admissible programs \cite{DBLP:journals/ai/MycroftO84,DBLP:conf/slp/LakshmanR91,DBLP:conf/iclp/SchrijversCWD08,BarbosaFloridoCosta19}. In practice, static type-checking is not widely
used in actual Prolog systems, but Prolog systems do rely on dynamic
typing to ensure that system built-in parameters are called with
acceptable arguments, such as {\em is/2}. In fact, the Prolog ISO
standard defines a set of predefined types and typing
violations \cite{DBLP:books/daglib/0083128}.

Motivated by these observations, we propose a step forward in the 
dynamic type checking of logic programs: to extend unification with a type checking mechanism. Type checking will thus become a core part of the resolution engine. This extension enables the detection of several bugs in Prolog programs which are rather difficult to capture in the standard untyped language, such as the unintended switch of arguments in a predicate call. This approach can also be used to help in the tracing of bugs in the traditional Prolog {\em Four-Port Model} debugging approach. 
In classical programming languages, type-checking essentially captures the use of functions on arguments of a type different from the expected. The parallel in logic programming is to capture type errors in queries applied to arguments of a different type. This is the essence of our new operational mechanism, here called {\em Typed SLD (TSLD)}.
Following Milner's argument on wrong programs \cite{DBLP:journals/jcss/Milner78}, type errors will be denoted by an extra value, \emph{wrong}; unification may succeed, fail, or be \emph{wrong}. As discussed in prior work, a three-valued semantics provides a natural framework for describing the new unification \cite{BarbosaFloridoCosta19}. We name the new evaluation mechanism, that extends unification with type checking thus performing dynamic typing, {\em Typed SLD (TSLD)}.
Let us now present a simple example of the use of TSLD-resolution:
\begin{example}
Consider the following program consisting of the three facts: 
\begin{verbatim}
p(0).
p(1).
p(a).
\end{verbatim}
Let $\square$ stand for success, {\em false} for failure and {\em wrong} for a run-time error. Assuming that constants $1$ and $a$ have different types (in this case, $int$ and $atom$, respectively), the TSLD-tree for the query $p(1)$ is:
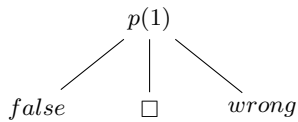
\begin{figure}[!h]
    \centering
\begin{tikzpicture}
\node {$p(1)$} [level distance = 1.2cm]
    child {node {$false$}} 
    child {node {$\square$}} 
    child {node {$wrong$}};
\end{tikzpicture}
\caption{TSLD-tree}
\end{figure}
\end{example}
In the following we prove that
TSLD-resolution is correct with respect to a new typed declarative semantics for logic programming based on the existence of several different semantic domains, for the
interpretation of terms, instead of the usual Herbrand universe domain.
The use of different domains for the semantics of logic programming is not
new \cite{DBLP:conf/slp/LakshmanR91}, but before it was used in a prescriptive typing approach where types were mandatory for every syntactic objects: functors, variables, and predicates.
Here we assume that only constants and function symbols have predefined types. 
Type correctness in the form of a reformulation of subject-reduction for SLD-resolution was defined in \cite{DBLP:conf/flops/DeransartS01} for a typed version of logic programs where also every syntactic objects must be typed statically. To semantically deal with dynamic typing here we define a new version of SLD (Typed SLD) and prove its soundness with respect to a new declarative semantics.
In another previous
work  \cite{DBLP:conf/iclp/SchrijversCWD08} dynamic type checking of Prolog predicates was done in two different scenarios: on calls from untyped to typed code using program transformation, and on calls to untyped code from typed code to check whether the untyped code satisfies previously made type annotations. In this previous work type annotations for predicates were necessary in both scenarios and the semantic soundness of these run-time checks was not studied. Here we do a semantic
study of dynamic typing and use it to show that a new operational mechanism
detecting run-time errors is sound. For this we use predefined types only for constants and function symbols.

The paper is organised as follows. Section 2 reviews some preliminary background concepts with the necessary definitions and results on the theory of types and logic programming, and sets the grounds for subsequent developments. In Section 3 we present the important new notion of {\em typed unification}, present the use of different {\em types} for constant symbols, and show that typed unification extends standard first order unification in the sense that when it succeeds the result is the same. Section 4 presents {\em TSLD-resolution}. We definite the notion of {\em TSLD-derivation step} and {\em TSLD-tree}, showing how run-time type errors are detected during program evaluation. We also distinguish the notion of a type error in the program from the notion of a type error in the query with respect to a program. In Section 5, we define a new declarative semantics for logic programming based on our previous work \cite{BarbosaFloridoCosta19} using a three-valued logic which uses the value {\em wrong} to denote run-time type errors, define ill-typed programs and queries, and show that TSLD-resolution is sound with respect to this declarative semantics. Finally, in section 6, we conclude and point out some research directions for future work.

\section{Preliminary Concepts}

In this section we will present concepts that are relevant both to the definitions of the operational and declarative semantics for logic programming. These concepts include the three-valued logic that will be used throughout the paper, defined initially in \cite{GlossarWiki:Kleene:1938}, and the definition of our syntax for types, terms, and programs.

\subsection{Three-Valued Logic}

The three-valued logic used in this paper is the Weak Kleene logic \cite{GlossarWiki:Kleene:1938}, later interpreted by \cite{3value} and \cite{AJL}. As in our previous work \cite{BarbosaFloridoCosta19} the third value is called \emph{wrong} and represents a (dynamic) type error. In this logic, the value \emph{wrong} propagates through every connective, which is a behaviour we want the type error value to have. In table \ref{con}, we describe the connectives in the logic.

\begin{table}[!htb]
    \begin{minipage}{.50\linewidth}
        \begin{center}
        \begin{tabular}{| c | c | c | c |}
            \hline
            $\bm{\wedge}$ & \textit{true} & \textit{false} & \emph{wrong} \\ \hline
            \textit{true} & \textit{true} & \textit{false} & \emph{wrong} \\ \hline
            \textit{false} & \textit{false} & \textit{false} & \emph{wrong} \\ \hline
            \emph{wrong} & \emph{wrong} & \emph{wrong} & \emph{wrong}  \\ \hline
        \end{tabular}
        \end{center}
    \end{minipage}
    \begin{minipage}{.50\linewidth}
        \begin{center}
        \begin{tabular}{| c | c | c | c |}
            \hline
            $\bm{\vee}$ & \textit{true} & \textit{false} & \textit{wrong} \\ \hline
            \textit{true} & \textit{true} & \textit{true} & \textit{wrong} \\ \hline
            \textit{false} & \textit{true} & \textit{false} & \textit{wrong} \\ \hline
            \textit{wrong} & \textit{wrong} & \textit{wrong} & \textit{wrong}  \\ \hline
            \end{tabular}
            \end{center}
    \end{minipage}
    \\
\caption{Connectives of the three-valued logic - conjunction and disjunction}
\label{con}
\end{table}

The negation of logic values is defined as: $\neg true = false$, $\neg false = true$ and $\neg \emph{wrong} = \emph{wrong}$. And implication is defined as: $p --> q \equiv (\neg p) \vee q$.

Note that whenever the value \emph{wrong} occurs in any connective in the logic, the result of applying that connective is \emph{wrong}.

\subsection{Types}

In this paper we fix the set of base types {\em  int, float, atom} and {\em string}, an enumerable set of compound types $f(\sigma_1, \cdots, \sigma_n)$, where $f$ is a function symbol and $\sigma_i$ are types, and an enumerable set of functional types of the form $\sigma_1 \times \cdots \times \sigma_n \rightarrow \sigma$, where $\sigma_i$ and $\sigma$ are types.

We use this specific choice of base types because they correspond to types already present, to some extent, in Prolog. Some built-in predicates already expect integers, floating point numbers, or atoms.

\subsection{Terms}

The alphabet of logic programming is composed of symbols from disjoint classes. For our language of terms we have an infinite set of variables \textbf{Var}, an infinite set of function symbols \textbf{Fun}, parenthesis and the comma \cite{Apt:1996:LPP:249573}.

Terms are defined as follows:
\begin{itemize}
    \item a variable is a term,
    \item if $f$ is an n-ary function symbol and $t_1, \dots, t_n$ are terms, then $f(t_1,\dots,t_n)$ is a term,
    \item if $f$ is a function symbol of arity zero, then $f$ is a term and it is called a constant.
\end{itemize}

A ground term is a term with no variables.
In the rest of the paper we assume that ground terms are assumed to be typed, meaning that each constant has associated to it a base type and for any ground compound term $f(t_1, \cdots t_n)$, the function symbol $f$ (of arity $n \geq 1$) has associated to it a functional type of the form $\sigma_1 \times \cdots \times \sigma_n \rightarrow f(\sigma_1, \cdots, \sigma_n)$.
Note that variables are not statically typed, but, as we will see in the forthcoming sections, type checking of the use of variables will be made dynamically through TSLD-resolution.

\subsection{Programs and Queries}

We now extend our language of terms to a language of programs by adding an infinite set of predicate symbols \textbf{Pred} and the reverse implication $\leftarrow$.

The definition of atoms, queries, clauses and programs is the usual one \cite{Apt:1996:LPP:249573}:
\begin{itemize}
    \item an atom is either a predicate symbol $p$ with arity $n$, applied to terms $t_1,\dots,t_n$, which we write as $p(t_1,\dots,t_n)$
    . We will represent atoms by $H,A,B$;
    \item a query is a finite sequence of atoms, which we will represent by $Q,\bar{A},\bar{B}$;
    \item a clause is of the form $H\leftarrow \bar{B}$, where $H$ is an atom of the form $p(t_1,\dots,t_n)$ and $\bar{B}$ is a query;
    \item a program is a finite set of clauses, which we will represent by $P$.
\end{itemize}

The interpretation of queries and clauses is quantified. Every variable that occurs in a query is assumed to be existentially quantified and every variable that occurs in a clause is universally quantified \cite{Apt:1996:LPP:249573}.

\section{Typed Unification}

Solving equality constraints using a unification algorithm \cite{Robinson1965,10.1145/357162.357169} is the main computational mechanism in logic programming. Logic programming usually uses an untyped term language and assumes a semantic universe composed of all semantic values: the Herbrand universe \cite{Herbrand1930,Apt:1996:LPP:249573}. 

However, in our work, we assume that the semantic values are split among several disjoint semantic domains and thus equality only makes sense inside each domain. Moreover each type will be mapped to a non-empty semantic domain. To reflect this, unification may now return three different outputs. Besides being successful or failing, unification can now return the \emph{wrong} value. This is the logical value of nonsense and reflects the fact that we are trying to perform unification between terms with different types corresponding to a type error during program evaluation.

A substitution is a mapping from variables to terms, which assigns to each variable $X$ in its domain a term $t$. We will represent bindings by $X \mapsto t$, substitutions by symbols such as $\theta, \eta, \delta \dots$, and applying a substitution $\theta$ to a term $t$ will be represented by $\theta(t)$. We say $\theta(t)$ is an instance of $t$.

Substitution composition is represented by $\circ$, i.e., the composition of the substitutions $\theta$ and $\eta$ is denoted $\theta \circ \eta$ and applying $(\theta \circ \eta)(t)$ corresponds to $\theta(\eta(t))$. We can also calculate substitution composition, i.e., $\delta = \theta \circ \eta$ as defined below \cite{Apt:1996:LPP:249573}.
\begin{definition}[Substitution Composition]
Suppose $\theta$ and $\eta$ are substitutions, such that $\theta = [X_1 \mapsto t_1, \dots, X_n \mapsto t_n]$ and $\eta = [Y_1 \mapsto t_1\prime, \dots, Y_m \mapsto t_m\prime]$. Then, composition $\eta \circ \theta$ is calculated by following these steps:
\begin{itemize}
    \item remove from the sequence $X_1 \mapsto \eta(t_1),\dots, X_n \mapsto \eta(t_n),Y_1 \mapsto t_1\prime,\dots,Y_m\mapsto t_m\prime$ the bindings $X_i \mapsto \eta(t_i)$ such that $X_i = \eta(t_i)$ and the elements $Y_i \mapsto t_i\prime$ for which $\exists X_j. Y_i = X_j$
    \item form a substitution from the resulting sequence.
\end{itemize}
\end{definition}

A substitution $\theta$ is called a unifier of two terms $t_1$ and $t_2$ iff $\theta(t_1) = \theta(t_2)$. If such a substitution exists, we say that the two terms are unifiable. In particular, a unifier $\theta$ is called a most general unifier (mgu) of two terms $t_1$ and $t_2$ if for every other unifier $\eta$ of $t_1$ and $t_2$, $\eta = \delta \circ \theta$, for some substitution $\delta$.

First order unification \cite{Robinson1965} assumes an untyped universe, so unification between any two terms always makes sense. Therefore, it either returns a mgu between the terms, if it exists, or halts with failure.

We argue that typed unification only makes sense between terms of the same type. Here we will extend a previous unification algorithm \cite{10.1145/357162.357169} to define a \emph{typed unification algorithm}, where failure will be separated into $false$, where two terms are not unifiable but may have the same type, and \emph{wrong}, where the terms cannot have the same type. 

\begin{definition}[Typed Unification Algorithm]\label{tus}
Let $t_1$ and $t_2$ be two terms, and $F$ be a flag that starts $true$. We create the starting set of equations as $S = \{t_1 = t_2\}$, and we will rewrite the pair $(S,F)$ by applying the following rules until it is no longer possible to apply any of them, or until the algorithm halts with \emph{wrong}. If no rules are applicable, then we output $false$ if the flag is $false$, or output the solved set $S$, which can be seen as a substitution.
\begin{enumerate}
    \item $(\{f(t_1,\dots,t_n) = f(s_1,\dots,s_n)\} \cup Rest,F) \to (\{t_1 = s_1,\dots,t_n=s_n\} \cup Rest,F)$
    \item $(\{f(t_1,\dots,t_n) = g(s_1,\dots,s_m)\} \cup Rest,F) \to wrong$, if $f \neq g$ or $n \neq m$
    \item $(\{c = c\} \cup Rest,F) \to (Rest,F)$
    \item $(\{c = d\}\cup Rest,F) \to (Rest,false)$, if $c\neq d$, and $c$ and $d$ have the same type
    \item $(\{c = d\}\cup Rest,F) \to wrong$, if $c \neq d$, and $c$ and $d$ have different types
    \item $(\{c = f(t_1,\dots,t_n)\}\cup Rest,F) \to wrong$
    \item $(\{f(t_1,\dots,t_n) = c\}\cup Rest,F) \to wrong$
    \item $(\{X = X\} \cup Rest,F) \to (Rest,F)$
    \item $(\{t = X\} \cup Rest,F) \to (\{X = t\} \cup Rest,F)$, where $t$ is not a variable and $X$ is a variable
    \item $(\{X = t\} \cup Rest,F) \to (\{X = t\} \cup [X \mapsto t](Rest),F)$, where $X$ does not occur in $t$ and $X$ occurs in $Rest$
    \item $(\{X = t\} \cup Rest,F) \to (Rest,false)$, where $X$ occurs in $t$ and $X \neq t$
\end{enumerate}

\end{definition}

Let us illustrate with an example.

\begin{example}
Let $t_1$ be $g(X,a,f(1))$ and $t_2$ be $g(b,Y,f(2))$. We generate the pair $(\{g(X,a,f(1))=g(b,Y,f(2))\},true)$, and proceed to apply the rewriting rules.\\
$(\{g(X,a,f(1))=g(b,Y,f(2))\},true) \to_1 \\(\{X = b, a = Y, f(1) = f(2)\},true) \to_{10} \\(\{X = b, Y = a, f(1) = f(2)\},true) \to_1 \\(\{X = b, Y = a, 1 = 2\},true) \to_5 \\(\{X = b, Y = a\},false) \to false$.
\end{example}
The successful cases of this algorithm are the same as for first order unification \cite{10.1145/357162.357169}. We will prove this result in the following theorem.

\begin{theorem}[Conservative with respect to term unification]\label{equi}
Let $t_1$ and $t_2$ be two terms. If we apply the Martelli-Montanari algorithm (MM algorithm) to $t_1$ and $t_2$ and it returns a set of solved equalities $S$, then the typed unification algorithm applied to the same two terms is also successful and returns the same set of equalities.
\end{theorem}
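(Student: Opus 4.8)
The plan is to show that every successful run of the MM algorithm can be simulated, step for step, by the typed unification algorithm while keeping the flag $F$ fixed at $true$, so that both algorithms act identically on the set of equations and halt with the same solved set $S$. I would proceed by induction on the length of a successful MM derivation $S_0 \to S_1 \to \cdots \to S_k = S$, maintaining as the inductive invariant that the typed algorithm admits a matching derivation $(S_0, true) \to \cdots \to (S_k, true)$ acting on the very same sequence of equation sets.

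The core observation is that the two rule sets share the same \emph{success-preserving} rules: MM term decomposition corresponds to rule 1, deletion of a trivial equation $c = c$ or $X = X$ to rules 3 and 8, the orientation step $t = X \to X = t$ to rule 9, and variable elimination to rule 10, all with matching side conditions (in particular rule 10's requirement that $X$ not occur in $t$ and occur in $Rest$) and all leaving $F$ untouched. Hence each step of a successful MM derivation is mirrored by the corresponding typed step on the same equation set, which discharges the inductive step as long as no failure configuration is encountered.

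It then remains to check that the configurations handled by the remaining typed rules never arise in a successful MM run. Rules 2, 5, 6 and 7 fire exactly on a symbol clash ($f \neq g$ or differing arity), a constant clash of differing type, or a constant-versus-compound mismatch, which are precisely the cases where MM halts with failure; rule 4 (a constant clash $c \neq d$ of equal type) and rule 11 (occurs-check failure) are likewise MM failure cases, and are the only typed rules that set $F$ to $false$. Since by hypothesis MM succeeds, none of these configurations is ever produced, so the typed algorithm never emits \emph{wrong} and never lowers the flag; it therefore outputs the solved set $S$ rather than $false$, and because the rules acting on $S$ along this common path are literally identical, the two outputs coincide.

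The main obstacle is to make the rule correspondence genuinely exhaustive, verifying that there is no equation form on which the typed algorithm gets stuck while MM continues, and to reconcile the nondeterministic choice of the selected equation in the two algorithms. For the latter I would either synchronise the two selection strategies along the induction or appeal to the confluence of the MM system (uniqueness of the solved form), so that the common terminal configuration to which no rule applies, which coincides with MM's notion of solved form, is reached in both cases.
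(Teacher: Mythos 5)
Your proposal is correct and follows essentially the same route as the paper's own proof: a rule-by-rule simulation showing that the success-preserving rules of the two algorithms coincide, while every configuration that triggers \emph{wrong} or lowers the flag (rules 2, 4--7, 11) is exactly an MM failure configuration and hence never arises in a successful MM run. Your explicit induction on the length of the MM derivation and your treatment of the nondeterministic choice of the selected equation (via synchronisation or confluence) are in fact more careful than the paper's argument, which glosses over both points.
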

The following theorem proves typed unification detects run-time type errors.

\begin{theorem}[Ill-typed unification]\label{itu}
If the output of the typed unification algorithm is \emph{wrong}, then there is no substitution $\theta$ such that $\theta(t_1)$ and $\theta(t_2)$ have the same type.
\end{theorem}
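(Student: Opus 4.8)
The plan is to prove the contrapositive direction implicitly by analyzing exactly which rewrite rules can produce the output \emph{wrong}, and showing that each such rule corresponds to an intrinsic type clash that no substitution can repair. Since the algorithm outputs \emph{wrong} only by firing one of rules (2), (5), (6), or (7), I would argue that in each case the two sides of the offending equation $u = v$ in the current set $S$ have incompatible types in a way that is preserved under every substitution. First I would establish an invariant: as the pair $(S,F)$ is rewritten, if some equation in the current set admits no common-type unifier under any substitution, then the original pair $\{t_1 = t_2\}$ also admitted no common-type unifier. Equivalently, I want to show that each rewrite rule preserves the property ``there exists $\theta$ such that every equation $u=v$ in $S$ has $\theta(u)$ and $\theta(v)$ of the same type.'' Rules (1), (3), (8), (9), (10) clearly preserve this, so an obstruction appearing later reflects one already present at the start.

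The core of the argument is the case analysis on the four \emph{wrong}-producing rules. For rule (5), $\{c = d\}$ with $c \neq d$ of different base types: since constants are ground and carry fixed predefined base types, $\theta(c) = c$ and $\theta(d) = d$ for any $\theta$, so their types differ regardless of $\theta$. For rules (6) and (7), an equation of the form $c = f(t_1,\dots,t_n)$: a constant has a base type while any instance $\theta(f(t_1,\dots,t_n)) = f(\theta(t_1),\dots,\theta(t_n))$ is a compound term whose function symbol $f$ of arity $n \geq 1$ carries a functional/compound type of the form $f(\sigma_1,\dots,\sigma_n)$, which is disjoint from every base type by the type syntax fixed in Section 2. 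For rule (2), an equation $f(t_1,\dots,t_n) = g(s_1,\dots,s_m)$ with $f \neq g$ or $n \neq m$: any instances are compound terms with head symbols $f$ and $g$ (of arities $n$ and $m$), whose associated compound types $f(\sigma_1,\dots,\sigma_n)$ and $g(\tau_1,\dots,\tau_m)$ differ because a compound type is determined by its function symbol and arity. In all four cases the type mismatch is structural and survives instantiation.

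Combining the invariant with the case analysis yields the result: when the algorithm halts with \emph{wrong}, the triggering equation has no common-type unifier, hence by the invariant the starting equation $t_1 = t_2$ has none either, i.e. no $\theta$ makes $\theta(t_1)$ and $\theta(t_2)$ share a type. I expect the main obstacle to be stating and proving the invariant cleanly, because the substitution-producing rules (9) and (10) modify the set $S$ by propagating a binding $X \mapsto t$, so I must check that applying $[X \mapsto t]$ to the remaining equations neither creates a spurious type clash nor masks a genuine one. The cleanest formulation is to track a single global witness substitution $\theta$ and argue that the existence of a common-type unifier for the whole set is invariant under each rewrite step; the binding rules simply restrict which $\theta$ remain admissible without ever making a previously unsatisfiable set satisfiable. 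Care is also needed to confirm that the type-assignment assumptions of Section 2 — each constant has a fixed base type, each $n$-ary function symbol ($n \geq 1$) a functional type producing a compound type — guarantee the disjointness of base and compound types used throughout the case analysis.
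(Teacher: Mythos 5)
Your proof is correct, but it takes a genuinely different route from the paper's. The paper argues in two parts: first, for substitutions with $\theta(t_1) = \theta(t_2)$, it invokes the conservativity theorem (Theorem~1, agreement with Martelli--Montanari unification) to get a contradiction; second, for substitutions that merely make the types equal, it runs a structural induction on $t_1$ and $t_2$, checking for each shape of equation that the algorithm cannot return \emph{wrong}. Your argument is instead an induction on the execution trace: an invariant (every non-\emph{wrong} rewrite step preserves the existence of a single substitution giving every equation in the current set equally-typed sides) combined with a local case analysis showing that rules (2), (5), (6), (7) can only fire on an equation that has no common-type instance under any substitution. This buys you two things. First, you never need Theorem~1 at all, since syntactic unifiability is subsumed by your property (equal instances trivially share a type), whereas the paper's proof leans on it. Second, you make explicit the point the paper's induction glosses over: rule (10) propagates bindings across equations originating in different subterms, so the paper's ``induction on $t_1$ and $t_2$'' is really an induction on sets of equations --- exactly the generalization your invariant states openly; the lemma you need there (replacing a subterm by one of the same type leaves the type of the enclosing term unchanged) follows from the compositional form of compound types $f(\sigma_1,\dots,\sigma_n)$. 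Two small patches for a full write-up: your list of invariant-preserving rules omits rules (4) and (11), which also must be checked --- trivially, since they only delete an equation, which cannot destroy an existing witness $\theta$; and since the paper defines types only for ground terms, you should either restrict to grounding substitutions or extend typing to open terms, an imprecision the paper's own proof shares.
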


\begin{example}
Let $t_1 = f(1,g(h(X,2)),Y)$ and $t_2 = f(Z,g(h(W,a)),1)$. The typed unification algorithm outputs $wrong$. We can see in Table \ref{trep} that there is no substitution $\theta$ such that $\theta(t_1) = \theta(t_2)$, nor any substitution $\theta$ such that $\theta(t_1)$ has the same type as $\theta(t_2)$, since the highlighted terms cannot have the same type for any substitution.
\begin{table}[!htb]
    \begin{minipage}{.50\linewidth}
        \begin{center}
        \begin{tikzpicture}
            [level 1/.style = {level distance = 1cm},
            level 2/.style = {level distance = 1cm}]
            \node {$f$}
                child {node {$1$}} 
                child {node {$g$}
                    child {node {$h$}
                        child {node {$X$}}
                        child {node [draw]{$2$}}
                    }
                } 
                child {node {$Y$}};
        \end{tikzpicture}
        \end{center}
    \end{minipage}
    \begin{minipage}{.50\linewidth}
        \begin{center}
        \begin{tikzpicture}
            [level 1/.style = {level distance = 1cm},
            level 2/.style = {level distance = 1cm}]
            \node {$f$}
                child {node {$Z$}} 
                child {node {$g$}
                    child {node {$h$}
                        child {node {$W$}}
                        child {node [draw]{$a$}}
                    }
                } 
                child {node {$1$}};        
        \end{tikzpicture}
        \end{center}
    \end{minipage}
    \\
\caption{Tree representation terms $t_1$ and $t_2$}
\label{trep}
\end{table}
\end{example}

\section{Operational Semantics}

The operational semantics of logic programming describes how answers are computed. Here we define Typed SDL-resolution (TSLD) which returns the third value {\em wrong} whenever it finds a type error. We start by defining a TSLD-derivation step, which is a variation on the basic mechanism for computing answers to queries in the untyped semantics for logic programming, the SLD-derivation step. The major difference is the use of the typed unification algorithm. Then we create TSLD-derivations by iteratively applying these singular steps. After this, we introduce the concept of TSLD-trees and use it to represent the search space for answers in logic programming. Finally, we interpret the contents of the TSLD-tree.

\subsection{TSLD-derivation}

To compute in logic programming, we need a program $P$ and a query $Q$. We can interpret $P$ as being a set of statements, or rules, and $Q$ as being a question that will be answered by finding an instance $\theta(Q)$ such that $\theta(Q)$ follows from $P$. The essence of computation in logic programming is then to find such $\theta$ \cite{Apt:1996:LPP:249573}.

In our setting the basic step for computation is the TSLD-derivation step. It corresponds to having a non-empty query $Q$ and selecting from $Q$ an atom $A$. If $A$ unifies with $H$, where $H\leftarrow \bar{B}$ is an input clause, we replace $A$ in $Q$ by $\bar{B}$ and apply an mgu of $A$ and $H$ to the query.

\begin{definition}[TSLD-derivation step]
Consider a non-empty query $Q = \bar{A_1},A,\bar{A_2}$ and a clause $c$ of the form $H \leftarrow \bar{B}$. Suppose that $A$ unifies (using typed unification) with $H$ and let $\theta$ be a mgu of $A$ and $H$. $A$ is called the {\em selected atom} of $Q$. Then we write $$\bar{A_1},A,\bar{A_2} \underset{c}\implies \theta(\bar{A_1},\bar{B},\bar{A_2})$$ and call it a {\em TSLD-derivation step}. $H \leftarrow \bar{B}$ is called its {\em input clause}.
If typed unification between the selected atom $A$ and the input clause $c$ outputs $wrong$ (or $false$) we write the TSLD-derivation step as 
$Q \implies wrong$ (or $Q \implies false,\bar{A_1},\bar{A_2}$). 
\end{definition}

In this definition we assume that $A$ is variable disjoint with $H$. It is always possible to rename the variables in $H \leftarrow \bar{B}$ in order to achieve this, without loss of generality.

\begin{definition}[TSLD-derivation]
Given a program $P$ and a query $Q$ a sequence of TSLD-derivation steps from $Q$ with input clauses of $P$ reaching the empty query, $false$, or $wrong$, is called a {\em TSLD-derivation} of $Q$ in $P$.
\end{definition}
If the program is clear from the context, we speak of a TSLD-derivation of the query $Q$
and if the input clauses are irrelevant we drop the reference to them.
Informally, a TSLD-derivation corresponds to iterating the process of the TSLD-derivation step. We say that a TSLD-derivation is {\em successful} if we reach the empty query, further denoted by $\square$. The composition of the mgus $\theta_1, \dots, \theta_n$ used in each TSLD-derivation step is the {\em computed answer substitution} of the query. A TSLD-derivation that reaches $false$ is called a {\em failed derivation} and a TSLD-derivation that reaches $wrong$ is called an {\em erroneous derivation}.

In a TSLD-derivation, at each TSLD-derivation step we have several choices. We choose an atom from the query, a clause from the program, and a mgu. It is proven in \cite{Apt:1996:LPP:249573} that the choice of mgu does not affect the success or failure of an SLD-derivation, as long as the resulting mgu is idempotent. Since for TSLD-derivations the successes are the same as the ones in a corresponding SLD-derivation, then the result still holds for TSLD.

The {\em selection rule}, i.e, how we choose the selected atom in the considered query, does not influence the success of a TSLD-derivation either \cite{Apt:1996:LPP:249573}, however if you stopped as soon as unification returns false, it could prevent us from detecting a type error, in a later atom. Let us show this in the following example.

\begin{example}\label{exfalsewrong}
Consider the logic program $P$ consisting of only one fact, $p(X,X)$, and the selection rule that chooses the leftmost atom at each step.

Then, if we stopped when reaching $false$, the query $Q = p(1,2),p(1,a)$ would have the TSLD-derivation $Q \implies false$, since typed unification between $p(X,X)$ and $p(1,2)$ outputs $false$. However, the query $Q\prime = p(1,a),p(1,2)$ has the TSLD-derivation $Q \implies wrong$, since typed unification between $p(X,X)$ and $p(1,a)$ outputs $wrong$.
\end{example}

In fact, as the comma stands for conjunction, and since $wrong \wedge false = false \wedge wrong = wrong$, we have to continue even if typed unification outputs $false$ in a step, and check if we ever reach the value $wrong$. In general, for any selection rule $S$ we can construct a query $Q$ such that it is necessary to continue when typed unification outputs $false$ for some atom in $Q$.
Therefore, when we reach the value $false$ in a TSLD-derivation step, we continue applying steps until either we obtain a value $wrong$ from typed unification or we have no more atoms to select. In this last case, we can safely say that we reached $false$. This guarantees independence of the selection rule. For the following example we use the selection rule that always chooses the leftmost atom in a query, which is the selection rule of Prolog.

\begin{example}
Let us continue example \ref{exfalsewrong}. The TSLD-derivation for $Q$ is $Q \implies false,p(1,a) \implies wrong$. Let $Q\prime\prime = p(1,2),p(1,1)$. Then the TSLD-derivation is $Q\prime\prime \implies false,p(1,1) \implies false$. 
\end{example}
Note that when we get to $false$ for a typed unification in a TSLD-derivation, we can only output $false$ or $wrong$, so either way it is not a successful derivation.

The selected clause from the program is another choice point we have at each TSLD-derivation step. We will discuss the impact of this choice in the next section.

\subsection{TSLD-tree}

When we want to find a successful TSLD-derivation for a query, we need to consider the entire search space, which consists of all possible derivations, choosing all possible clauses for a selected atom. We are considering a fixed selection rule here, so the only thing that changes between derivations is the selected clause.
We say that a clause $H \leftarrow \bar{B}$ is {\em applicable} to an atom $A$ if $H$ and $A$ have the same predicate symbol with the same arity.
\begin{definition}[TSLD-tree]
Given a program $P$ and a query $Q$, a {\em TSLD-tree} for $P \cup \{Q\}$ is a tree where branches are TSLD-derivations of $P \cup \{Q\}$ and  every node $Q$ has a child for each clause from $P$ applicable to the selected atom of $Q$. 
\end{definition}
Prolog uses the leftmost selection rule, where one always selects the leftmost atom in the query, and since the selection rule does not change the success of a TSLD-derivation we will use this selection rule in the rest of the paper.

\begin{definition}[TSLD-tree classification]
\begin{itemize}
    \item If a TSLD-tree contains the empty query, we call it {\em successful}.
    \item If a TSLD-tree is finite and all its branches are erroneous TSLD-derivations, we call it {\em finitely erroneous}.
    \item If a TSLD-tree is finite and it is not successful nor finitely erroneous, we say it is {\em finitely failed}.
\end{itemize}
\end{definition}

\begin{example}
Let program $P$ be:
\begin{verbatim}
p(1).
p(2).
q(1).
q(a).
r(X) :- p(X),q(X).
\end{verbatim}
and let query $Q$ be $r(1)$. The TSLD-tree for $Q$ and $P$ is the following successful TSLD-tree:
\begin{center}
\begin{tikzpicture}
[level 1/.style = {sibling distance = 4cm},
level 2/.style = {sibling distance = 4cm},
level 3/.style = {sibling distance = 2cm}]
    \node {$r(1)$} [level distance = 1cm]
        child {node {$p(1),q(1)$}
            child {node {$q(1)$} 
                child {node {$\square$} 
                }
                child {node {$wrong$} 
                }
            }
            child {node {$false,q(1)$} 
                child {node {$false$} 
                }
                child {node {$wrong$}
                }
            }
        };  
\end{tikzpicture}
\end{center}
\end{example}
We will present two auxiliary definitions which are needed to clearly define the notion of a type error in a program.

\begin{definition}[Generic Query]
Let $Q$ be a query and $P$ a program. We say that $Q$ is a {\em generic query} of $P$ iff $Q$ is only composed of an atom of the form $p(X_1,\dots,X_n)$ for each predicate symbol $p$ that occurs in the head of at least one clause in $P$, where $X_1,\dots,X_n$ are variables that occur only once in the query, and there are no other atoms in $Q$.
\end{definition}

\begin{example}
Let $P$ be the program defined as follows:
\begin{verbatim}
p(X,X).
q(X) :- p(1,a).
\end{verbatim}
Then, given the generic query $p(X_1,X_2),q(X_3)$, we have the following TSLD-derivation: $p(X_1,X_2),q(X_3) \implies q(X_3) \implies p(1,a) \implies wrong$
\end{example}

\begin{definition}[Blamed Clause]
Given a program $P$ and a query $Q$, a clause $c$ is a {\em blamed clause} of the TSLD-tree for $P \cup \{Q\}$ if all derivations where $c$ is a input clause are erroneous.
\end{definition}
The blamed clause is a clause in the program which causes a type error. A similar notion was first defined for functional programming languages with the blame calculus \cite{DBLP:conf/esop/WadlerF09}.

\begin{example}
Let $P$ be the following program, with clauses $c_1$, $c_2$, and $c_3$, respectively:
\begin{verbatim}
p(1).
q(a).
q(X) :- p(a).
\end{verbatim}
Then for the query $p(2),q(b)$, we have the following TSLD-tree:

\begin{center}
\begin{tikzpicture}
[level 1/.style = {sibling distance = 4cm},
level 2/.style = {sibling distance = 4cm},
level 3/.style = {sibling distance = 2cm}]
    \node {$p(2),q(b)$} [level distance = 1.3cm]
        child {node {$false,q(b)$}
            child {node {$false$} edge from parent node [left] {$c_2$}
            }
            child {node {$p(a)$} 
                child {node {$wrong$} edge from parent node [right] {$c_1$}
                }
            edge from parent node [right] {$c_3$}
            }
        edge from parent node [left] {$c_1$}
        };
\end{tikzpicture}
\end{center}

In this case, $c_3$ is a blamed clause, since every derivation that uses it eventually reaches $wrong$. Note that $c_1$ is not a blamed clause, because the leftmost branch of the TSLD-tree uses $c_1$ but is $false$.
\end{example}

\begin{definition}[Type Error in the Program]
Suppose we have a program $P$ and a generic query $Q$. Then $P$ has a {\em type error} if there is a blamed clause in the TSLD-tree for $P \cup \{Q\}$.
\end{definition}

Note that if a program does not have a type error, then there is no \emph{blamed clause} in the TSLD-tree. Also note that the order of the atoms in the generic query may change the place where we use the blamed clause, but it will not change the fact that there exists one or not.

\begin{example}
Assume the same program from Example 8. Let $Q_1 = p(X1),q(X2)$ and $Q_2 = q(X1),p(X2)$ be generic queries. The TSLD-trees for $P \cup \{Q_1\}$ and $P \cup \{Q_2\}$ are:
\begin{table}[!htb]
    \begin{minipage}{.50\linewidth}
        \begin{center}
        \begin{tikzpicture}
        [level 1/.style = {sibling distance = 4cm},
        level 2/.style = {sibling distance = 4cm},
        level 3/.style = {sibling distance = 2cm}]
            \node {$p(X1),q(X2)$} [level distance = 1.3cm]
                child {node {$q(X2)$}
                    child {node {$\square$} edge from parent node [left] {$c_2$}
                    }
                    child {node {$p(a)$} 
                        child {node {$wrong$} edge from parent node [right] {$c_1$}
                        }
                    edge from parent node [right] {$c_3$}
                    }
                edge from parent node [left] {$c_1$}
                };
        \end{tikzpicture}
        \end{center}
    \end{minipage}
    \begin{minipage}{.50\linewidth}
        \begin{center}
        \begin{tikzpicture}
        [level 1/.style = {sibling distance = 4cm},
        level 2/.style = {sibling distance = 4cm},
        level 3/.style = {sibling distance = 2cm}]
            \node {$q(X1),p(X2)$} [level distance = 1.3cm]
                child {node {$p(X2)$}
                    child {node {$\square$} edge from parent node [left] {$c_1$}
                    }
                edge from parent node [left] {$c_2$}
                }
                child {node {$p(a),p(X2)$}
                    child {node {$wrong$} edge from parent node [right] {$c_1$}
                    }
                edge from parent node [left] {$c_3$}
                };
        \end{tikzpicture}
        \end{center}
    \end{minipage}
    \\
\label{ex9}
\end{table}

\end{example}

Intuitively, having a type error in the program means that somewhere in the program we will perform typed unification between two terms that do not have the same type.

Consider a generic query $Q = A,\bar{A}$, where $A = p(X_1,\dots,X_n)$. For some derivation, after one step, we will have $\theta(\bar{B},\bar{A})$, where $H\leftarrow \bar{B}$ is a clause in $P$ and $\theta$ is a unifier of $A$ and $H$. Since $\theta$, or any other idempotent mgu of $A$ and $H$, is a renaming of $\{X_1 \mapsto t_1,\dots,X_n \mapsto t_n\}$, where $H = p(t_1,\dots,t_n)$, and since the variables $X_1,\dots,X_n$ do not occur in $\bar{B}$ because the clause is variable disjoint from the query by definition, nor do they occur in $\bar{A}$, by the definition of the generic query, then $\theta(\bar{B},\bar{A}) = \bar{B},\bar{A}$. The argument holds for any other atom in $\bar{A}$. Thus whenever the selected atom belongs to the original query $Q$ the result is never $wrong$ and the mgu can be ignored.

After selecting the blamed clause $c$, every TSLD-derivation is such that $Q_0 \implies \dots \implies Q_n \implies \emph{wrong}$. Thus, at step $Q_n$, the selected atom comes from the program and every mgu applied up to this point is from substitutions arising from the program itself and not the query. Therefore, the type error was in the program.

\begin{definition}[Type Error in the Query]
Let $P$ be a program and $Q$ be a query. If there is no type error in $P$ and the TSLD-tree is finitely erroneous, then we say that there is a {\em type error in the query} $Q$ with respect to $P$.
\end{definition}

If there is no type error in the program $P$ but the TSLD-tree is finitely erroneous, then that error must have occurred in a unification between terms from the query and the program. We then say that the type error is in the query.

\begin{example}
Now suppose we have the following program:
\begin{verbatim}
p(1).
q(a).
q(X) :- p(X).
\end{verbatim}
Let us name the clauses $c_1$, $c_2$, and $c_3$, respectively. The following trees are the TSLD-tree for a generic query, and the TSLD-tree for the query $Q\prime = q(1.1)$.

\begin{table}[!htb]
    \begin{minipage}{.50\linewidth}
        \begin{center}
        \begin{tikzpicture}
        [level 1/.style = {sibling distance = 4cm},
        level 2/.style = {sibling distance = 4cm},
        level 3/.style = {sibling distance = 2cm}]
            \node {$p(X1),q(X2)$} [level distance = 1.3cm]
                child {node {$q(X2)$}
                    child {node {$\square$} edge from parent node [left] {$c_2$}
                    }
                    child {node {$p(X)$} 
                        child {node {$\square$} edge from parent node [right] {$c_1$}
                        }
                    edge from parent node [right] {$c_3$}
                    }
                edge from parent node [left] {$c_1$}
                };
        \end{tikzpicture}
        \end{center}
    \end{minipage}
    \begin{minipage}{.50\linewidth}
        \begin{center}
        \begin{tikzpicture}
        [level 1/.style = {sibling distance = 4cm},
        level 2/.style = {sibling distance = 4cm},
        level 3/.style = {sibling distance = 2cm}]
            \node {$q(1.1)$} [level distance = 1.3cm]
                child {node {$wrong$}
                edge from parent node [left] {$c_2$}
                }
                child {node {$p(1.1)$}
                    child {node {$wrong$}
                    edge from parent node [right] {$c_1$}
                    }
                edge from parent node [right] {$c_3$}
                };
        \end{tikzpicture}
        \end{center}
    \end{minipage}
    \\
\label{ex10}
\end{table}

On the left-hand side, we see that for a generic query $Q$, the TSLD-tree for $P \cup \{Q\}$ has no blamed query, which means that there is no type error in the program. On the right-hand side the tree is finitely erroneous, therefore there is a type error in query $Q\prime$.
\end{example}

\section{Declarative Semantics}

The declarative semantics of logic programming is, in opposition to the operational one, a definition of what the programs compute. The fact that logic programming can be interpreted this way supports the fact that logic programming is declarative \cite{Apt:1996:LPP:249573}. In this section, we will introduce the concept of interpretations, which takes us from the syntactic programs we saw and used so far into the semantic universe, giving them meaning. With this interpretation we will redefine a declarative semantics for logic programming first defined in \cite{BarbosaFloridoCosta19} and prove a connection between both the operational and the declarative semantics.

\subsection{Domains}

Let $U$ be a non-empty set of semantic values, which we will call the universe. We assume that the universe is divided into domains such that each ground type is mapped to a non-empty domain. Thus, $U$ is divided into domains as follows: $U = Int + Float + Atom + String + A_1 + \dots + A_n + F + Bool + W$, where $Int$ is the domain of integer numbers, $Float$ is the domain of floating point numbers, $Atom$ is the domain of non-numeric constants, $String$ is the domain of strings, $A_i$ are domains for trees, where each domain has trees whose root is the same functor symbol and its n-children belong to n domains and $F$ is the domain of functions. Moreover, we define $Bool$ as the domain containing $true$ and $false$, and $W$ as the domain with the single value $wrong$, corresponding to a run-time error. We will call $Int$, $Float$, $Atom$, and $String$ the base domains, and $A_1, \dots, A_n$ the tree domains.
In particular, we can see that constants are separated into several predefined base domains, one for each base type, while complex terms, i.e. trees, are separated into domains depending on the principal function symbol (root) and the n-tuple inside the parenthesis (n-children).

\subsection{Interpretations}

 Every constant of type $T$ is associated with a semantic value in one of the base domains, $Int$, $Float$, $Atom$, or $String$, corresponding to $T$. Every function symbol $f$ of arity $n$ in our language is associated with a mapping $f_U$ from any n-tuple of base or tree domains $\delta_1 \times \dots \times \delta_n$ to the domain $F(\delta_1,\dots,\delta_n)$, which is the domain of trees whose root is $f$ and the n-children are in the domains $\delta_i$. 

To define the semantic value for terms, we will first have to define states. States, $\Sigma$, are mappings from variables into values of the universe. We also define a function $domain$ that when applied to a semantic value returns the domain it belongs to. The semantic value of a term is defined as follows:\\

\noindent
$|[ X|]_{\Sigma} = \Sigma(X)$\\
$|[ c|]_{\Sigma} = c_U$\\
$|[ f(t_1,\dots,t_n)|]_{\Sigma} = f_U(|[ t_1|]_{\Sigma},\dots,|[ t_n|]_{\Sigma})$\\

An {\em interpretation} associates every predicate symbol $p$ with a function $p_U$ in $F$, such that the output of the function $p_U$ is the domain $Bool$ and the input is a union of tuples of domains. For each tuple that is in its domain, the function $p_U$ either returns $true$ or $false$. We will use $|[~|]_{I,\Sigma}$ to denote the semantics of an {\em expression} $E$, which can be an atom, a query, or a clause, in an interpretation $I$, and define it as follows:\\

\noindent
$|[ p(t_1,\dots,t_n)|]_{I,\Sigma}$~=~{\bf if} $(domain(|[ t_1|]_{\Sigma}),\dots,domain(|[ t_n|]_{\Sigma})) \subseteq domain(I(p))$ \\
\hspace*{3.5cm} \textbf{then} $I(p)(|[ t_1|]_{\Sigma},\dots,|[ t_n|]_{\Sigma})$ \\
\hspace*{3.5cm} \textbf{else} $wrong$\\
$|[ A_1,\dots,A_n|]_{I,\Sigma}~=~|[ A_1|]_{I,\Sigma} \wedge \dots \wedge |[ A_n|]_{I,\Sigma}$\\
$|[ q(t_1,\dots,t_n) :- \bar{B}|]_{I,\Sigma}~=~(|[ \bar{B}|]_{I,\Sigma} --> (|[ q(t_1,\dots,t_n)|]_{I,\Sigma}))$\\ \\
Note that if the clause is of the form $H \leftarrow$, then its semantics is equivalent to that of $H$.

\subsection{Models}

The term language and their semantic values are fixed, thus each interpretation $I$ is determined by the interpretation of the predicate symbols. Interpretations differ from each other only in the functions $p_U$ they associate to each predicate $p$ defined in $P$.
We now define a {\em context} as a set $\Delta$ of pairs of the form $X : D$, where $X$ is a variable that occurs only once in the set, and $D$ is a domain. We say that $\Sigma$ complies with $\Delta$ if every binding $X : v$ in $\Sigma$ is such that $(X : D) \in \Delta$ and $v \in D$.
An interpretation $I$ is a {\em model} of $E$ in the context $\Delta$ iff for every state $\Sigma$ that complies with $\Delta$, $|[E|]_{I,\Sigma}=true$. We will denote this as $\Delta \models |[E|]_I$. Given a program $P$, we say that an interpretation $I$ is a model of $P$ in context $\Delta$ if $I$ is a model of every clause in $P$ in context $\Delta$. Here we assume, without loss of generality, that all clauses are variable disjoint with each other.
If two expressions $E_1$ and $E_2$ are such that every model of $E_1$ in a context $\Delta$ is also a model of $E_2$ in the context $\Delta$, then we say that $E_2$ is a semantic consequence of $E_1$ and represent this by $E_1 \models E_2$.
Suppose two interpretations $I_1$ and $I_2$ are models of program $P$ in some context $\Delta$. Suppose, in particular, that for some predicate $p$ of $P$ the associated function is $p_{U1}$ for $I_1$ and $p_{U2}$ for $I_2$. Let us call $T_i$ the set of tuples of terms for which $p_{Ui}$ outputs $true$, and $F_i$ to the set of tuples of terms for which $p_{Ui}$ outputs $false$. We say that $I_1$ is smaller than $I_2$ if $T_1\subseteq T_2$ and, if $T_1 = T_2$, then $F_1 \subseteq F2$.
We say that a model $I$ of $P$ in context $\Delta$ is {\em minimal} if for every other model $I\prime$ of $P$ in context $\Delta$, $I$ is smaller than $I\prime$.

\begin{example}
Consider the program $P$ defined below:
\begin{verbatim}
father(john,mary).
father(phil,john).
grandfather(X,Y) :- father(X,Z), father(Z,Y).
\end{verbatim}
Suppose that interpretation $I_1$ is such that $p_{U_1}$ is associated with $grandfather$ and $p_{U1} :: Atom \times Atom \to Bool$. Suppose that $p_{U2}$ is associated to $grandfather$ in $I_2$, with the same domain. Suppose, also, that $p_{U3}$, associated in $I_3$ with $grandfather$, is such that $p_{U3} :: Atom \times Atom \cup Int \times Int \to Bool$. Let the sets $T_1 = \{(phil,mary)\}$, $T_2 = \{(phil,mary), (john,caroline)\}$, and $T_3 = \{(phil,mary)\}$ be the sets of accepted tuples for $p_{U1}$, $p_{U2}$, and $p_{U3}$, respectively.

Thus, if these interpretations associate the same function $q_U :: Atom \times Atom \to Bool$ to $father$, and $T = \{(john,mary), (phil,john)\}$ the set of accepted tuples for $q_U$, then all $I_i$ are models of $P$ in context $\Delta = \{X: Atom, Y : Atom, Z : Atom\}$. In fact, all states $\Sigma$ that comply with $\Delta$ are such that $|[grandfather(X,Y) :- father(X,Z),father(Z,Y)|]_{I_i,\Sigma}$ is true, for all $i = 1,2,3$.

But note that $T_1 \subseteq T_2$, and $T_1 = T_3$, but $F_1 \subseteq F_3$. In fact, any smaller domain or set $T_k$ would not model $P$. Therefore $I_1$ is the minimal model of $P$.
\end{example}

\subsection{Type Errors}

To calculate the set of accepted tuples for a given interpretation we will use the immediate consequence operator $T_P$. The $T_P$ operator is traditionally used in the logic programming literature to iteratively calculate the minimal model of a logic program as presented in \cite{DBLP:journals/jacm/EmdenK76,Apt:1996:LPP:249573,Lloyd:1984:FLP:2214}.

Since for us interpretations for predicates are typed, $T_P\uparrow \omega(\emptyset)$ does not generate an interpretation by itself. Instead it generates a set of atoms $S$. Then we say that any interpretation $I$ derived from $S$ is such that for all predicates $p$ occurring in $S$, $I(p) :: (D_{(1,1)} \times \dots \times D_{(1,n)}) \cup \dots \cup (D_{(k,1)},\dots,D_{(k,n)}) \to Bool$, where for all $i$ there is at least one atom $p(v_1,\dots,v_n) \in S$ such that $v_1\in D_{(i,1)},\dots,v_n\in D_{(i,n)}$. Note that these interpretations may not be models of $P$ using our new definition of a model. We are now able to define the notion of {\em ill-typed program}.

\begin{definition}[Ill-typed Program]
Let $P$ be a program. If no interpretation derived from $T_P \uparrow \omega(\emptyset)$ is a model of $P$, we say that $P$ is an ill-typed program.
\end{definition}

\begin{example}
Let $P$ be the program defined as:
\begin{verbatim}
p(1).
p(a).
q(X) :- p(1.1).
\end{verbatim}
Then $S = T_P\uparrow \omega(\emptyset) = \{p(1),p(a)\}$. So any interpretation $I$ derived from $S$ is such that $p_I :: Int \cup Atom \to Bool$. Therefore for any context $\Delta$, for every $\Sigma$ that complies with $\Delta$, $|[q(X) :- p(1.1)|]_{I,\Sigma} = wrong$. Therefore no such $I$ is a model of $P$.
\end{example}

The reason why $T_P\uparrow \omega(\emptyset)$ is always a minimal model of $P$ in the untyped semantics, comes from the fact that whenever a body of a clause is $false$ for all states, then the clause is trivially $true$ for all states. However in our semantics, since we are separating these cases into $false$ and $wrong$, the $wrong$ ones do not trivially make the formula $true$, making it $wrong$ instead. These are the ill-typed cases.

\begin{lemma}[Blamed Clause Type Error]\label{bcc}
Suppose there is a type error in the program with blamed clause $c = H \leftarrow A_1,\dots,A_m$. Then, $\exists A_i = p(t_1,\dots,t_n)$ such that $\forall p(s_1,\dots,s_n) \in T_P \uparrow \omega (\emptyset). \forall \Sigma .\exists j. domain(|[t_j|]_{\Sigma}) = domain(|[s_j|]_{\Sigma})$.
\end{lemma}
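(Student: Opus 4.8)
The plan is to descend from the operational definition of a blamed clause to the domain-level statement about $T_P \uparrow \omega(\emptyset)$, isolating the body atom that will serve as the witness $A_i$. First I would fix a generic query $Q$ for which $c = H \leftarrow A_1,\dots,A_m$ is the blamed clause of the TSLD-tree for $P \cup \{Q\}$. By definition every derivation that uses $c$ as an input clause is erroneous, so in each such branch some TSLD-derivation step outputs $wrong$. Invoking the observation made just before the lemma --- that resolving a generic-query atom against the head $H$ always succeeds and leaves the body $A_1,\dots,A_m$ unchanged by the mgu --- I can locate that $wrong$ inside the resolution of one of the body atoms against an applicable clause of $P$. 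I would then take the body atom $A_i = p(t_1,\dots,t_n)$ at which this $wrong$ is forced as the witness for the existential quantifier $\exists A_i$ in the conclusion.

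Next I would tie the facts in the conclusion to the fixpoint. The atoms $p(s_1,\dots,s_n)$ ranging over $T_P \uparrow \omega(\emptyset)$ are exactly the ground atoms with predicate $p$ that participate in the resolution of $A_i$, and the interpretation $I$ derived from $T_P \uparrow \omega(\emptyset)$ has $domain(I(p))$ equal to the union, over those facts, of $domain(|[s_1|]_{\Sigma}) \times \dots \times domain(|[s_n|]_{\Sigma})$. I would then read the if-then-else clause defining $|[p(t_1,\dots,t_n)|]_{I,\Sigma}$ alongside the rewriting rules of the typed unification algorithm: by rule~1 the unification of $A_i$ with a fact decomposes into the coordinate equations $t_1 = s_1,\dots,t_n = s_n$, and these are precisely the coordinate-wise domain comparisons that both decide the operational outcome and test whether the tuple $(domain(|[t_1|]_{\Sigma}),\dots,domain(|[t_n|]_{\Sigma}))$ meets $domain(I(p))$.

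The key step is then to establish, for every fact $p(s_1,\dots,s_n) \in T_P \uparrow \omega(\emptyset)$ and every state $\Sigma$, the existence of a coordinate $j$ with $domain(|[t_j|]_{\Sigma}) = domain(|[s_j|]_{\Sigma})$. Here I would argue coordinate by coordinate from the typed unification run of $A_i$ against the fact: constant coordinates of $t_j$ are governed by rules 3--5, variable coordinates by rules 8--10, and I would show that the way the branch reaches the $wrong$-step --- rather than being discarded beforehand --- forces at least one coordinate of $A_i$ to share its domain with the corresponding coordinate of each fact. For variable coordinates I would let $j$ depend on $\Sigma$, as the statement permits, tracking $domain(|[t_j|]_{\Sigma}) = domain(\Sigma(t_j))$, and then pick, for each pair $(\,p(s_1,\dots,s_n),\Sigma\,)$, the agreeing coordinate so produced.

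I expect the main obstacle to be exactly this coordinate-level agreement: bridging a single erroneous derivation to the universal statement over all facts of $T_P \uparrow \omega(\emptyset)$ and all states $\Sigma$, while keeping the chosen coordinate $j$ consistent with the $\Sigma$-dependence that variable arguments introduce. A secondary subtlety, which I would settle up front, is the interaction between $false$ and $wrong$ along a branch: since the selection rule requires continuing past $false$ until a $wrong$ is reached (as discussed around Example~\ref{exfalsewrong}), I must verify that the witness $A_i$ I single out is genuinely the atom responsible for the $wrong$ in every branch using $c$, and not merely one that fails. Establishing selection-rule independence for this choice, via the result quoted from \cite{Apt:1996:LPP:249573}, would close the argument.
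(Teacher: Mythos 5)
You set out to prove the statement with the printed ``$=$'', but that equality is a typo, and taking it at face value sinks the whole plan. The claim the paper actually proves --- and the one needed in part (2) of Theorem~\ref{sound}, where the lemma must force the \textbf{else}-branch $|[A_i|]_{I,\Sigma} = wrong$ of the semantics of atoms --- is that some body atom $A_i = p(t_1,\dots,t_n)$ satisfies: for every $p(s_1,\dots,s_n) \in T_P\uparrow\omega(\emptyset)$ and every $\Sigma$ there exists $j$ with $domain(|[t_j|]_{\Sigma}) \neq domain(|[s_j|]_{\Sigma})$. You can confirm the intended polarity from the paper's own proof, whose contradiction hypothesis is exactly the negation of the $\neq$ version (``for all $A_i$ there is some fact and some $\Sigma$ with \emph{all} coordinate domains equal''), and from a concrete instance: for the program $p(1)$, $q(a)$, $q(X) \leftarrow p(a)$, the blamed clause has body atom $p(a)$, the only $p$-fact in the fixpoint is $p(1)$, and the unique coordinate gives $Atom \neq Int$ --- so the literal ``$=$'' statement is false and no strategy can close it. Your key step is backwards for the same reason: a $wrong$ outcome of typed unification is certified by rules 2, 5, 6 and 7, i.e.\ precisely by a coordinate or deeper subterm pair that \emph{cannot} share a type; it never forces any coordinate of $A_i$ to agree with the corresponding coordinate of each fact, contrary to what your ``key step'' asserts.

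Separately from the polarity, your direct, forward-chasing approach leaves open the quantifier bridge that you yourself flag as the main obstacle: one erroneous branch pins down a $wrong$ against one program clause, whereas the conclusion quantifies over all atoms of $T_P\uparrow\omega(\emptyset)$ (semantic facts, not the clause heads appearing in derivations) and over all states $\Sigma$, with a single witness $A_i$ serving them all. The paper dissolves both difficulties at once by arguing contrapositively: if every body atom $A_i$ had some fact $p(s_1,\dots,s_n)$ in the fixpoint and some $\Sigma$ making all coordinate domains equal, then resolving the body atoms accordingly yields a derivation $A_1,\dots,A_m,\bar{B} \implies \dots \implies \bar{B}$ or $A_1,\dots,A_m,\bar{B} \implies \dots \implies false,\bar{B}$ in which no unification returns $wrong$; this derivation uses $c$ and is not erroneous, contradicting the definition of blamed clause. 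Recasting your argument in that contrapositive form, rather than tracking the $wrong$ coordinate-by-coordinate through the rewriting rules, is what would close the gap.
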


Effectively what this means is that if there is a type error in the program, then the blamed clause is such that it will not be used to calculate the $T_P \uparrow \omega (\emptyset)$, since at least one of the atoms in its body will never be able to be used in an application of $T_P$. We can also have a ill-typed query, and we define it as follows.

\begin{definition}[Ill-typed Query]
Let $P$ be a program. If any interpretation $I$ derived from $T_P\uparrow \omega(\emptyset)$, such that $I$ models $P$ in some context $\Delta$, is such that $I$ is not a model of $Q$ in the context $\Delta$, then we say that $Q$ is an ill-typed query with respect to $P$.
\end{definition}

\subsection{Soundness of TSLD-resolution}

In this section we will prove that TSLD-resolution is sound, i.e. if there is a successful derivation of a query $Q$ in program $P$ with a correct answer substitution $\theta$, then every model of $P$ is also a model of $\theta(Q)$; if there is a type error in the program, then the program is ill-typed; and if there is a type error in the query, the query is ill-typed with respect to the program.
To prove this we will introduce the following auxiliary concept.

\begin{definition}[Resultant]
Suppose we have a TSLD-derivation step $Q_1 \implies \theta(Q_2)$. Then we define the resultant associated to this step as $\theta(Q_1) \leftarrow Q_2$.
\end{definition}

\begin{lemma}[Soundness of resultants]\label{sor}
Let $Q_1 \implies \theta(Q_2)$ be a TSLD-derivation step using input clause $c$ and $r$ be the resultant associated with it. Then:
\begin{enumerate}
\item $c \models r$; 
\item for any TSLD-derivation of $P \cup \{Q\}$ with resultants $r_1,\dots,r_n$, $P\models r_i$ (for all $i \geq 0$).
\end{enumerate}
\end{lemma}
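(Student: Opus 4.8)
The plan is to establish part~(1) and then bootstrap part~(2) from it by induction. The workhorse throughout is a \emph{substitution lemma}: for every expression $E$ (term, atom, query or clause), interpretation $I$, state $\Sigma$ and substitution $\theta$ one has $|[\theta(E)|]_{I,\Sigma} = |[E|]_{I,\Sigma\theta}$, where $\Sigma\theta$ is the state defined by $(\Sigma\theta)(X) = |[\theta(X)|]_{\Sigma}$. I would prove this first by structural induction on terms, using the three defining clauses for $|[X|]_{\Sigma}$, $|[c|]_{\Sigma}$ and $|[f(t_1,\dots,t_n)|]_{\Sigma}$, and then lift it to atoms, queries and clauses; the one point needing attention is the atom clause, where one must check that the domain test $(domain(\dots),\dots)\subseteq domain(I(p))$ is unchanged when the evaluating state is moved from $\Sigma$ to $\Sigma\theta$.

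For part~(1), fix a context $\Delta$ and an interpretation $I$ that models $c = H \leftarrow \bar B$ in $\Delta$, take any state $\Sigma$ complying with $\Delta$, and evaluate the resultant $r$, whose head is $\theta(\bar A_1, A, \bar A_2)$ and whose body is the query produced by the step, $\theta(\bar A_1, \bar B, \bar A_2)$. Applying the substitution lemma to both sides moves the whole computation to the state $\Sigma\theta$, so $|[r|]_{I,\Sigma}$ equals $(a_1 \wedge b \wedge a_2) \to (a_1 \wedge h \wedge a_2)$ with $a_1 = |[\bar A_1|]_{I,\Sigma\theta}$, $a_2 = |[\bar A_2|]_{I,\Sigma\theta}$, $b = |[\bar B|]_{I,\Sigma\theta}$ and $h = |[A|]_{I,\Sigma\theta}$, where $\to$ is the three-valued implication $p \to q \equiv (\neg p)\vee q$. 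Since $\theta$ is an mgu of $A$ and $H$ we have $\theta(A) = \theta(H)$, whence the substitution lemma gives $h = |[H|]_{I,\Sigma\theta}$. The hypothesis that $I$ models $c$ yields $b \to h = true$, which by the truth table of $\to$ forces $(b,h)$ to be one of $(\textit{true},\textit{true})$, $(\textit{false},\textit{true})$, $(\textit{false},\textit{false})$; in particular $b$ and $h$ are both different from \emph{wrong}. A three-case check, reading off the conjunction table, then shows $(a_1 \wedge b \wedge a_2)\to(a_1 \wedge h \wedge a_2) = true$, so $I$ models $r$ in $\Delta$ and $c \models r$.

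The main obstacle is the third truth value. In a two-valued setting $c \models r$ would be immediate, because $r$ is a harmless instance of a consequence of $c$; here, however, \emph{wrong} propagates through every connective, so an antecedent evaluating to \emph{wrong} collapses the reverse implication to \emph{wrong} rather than to $true$. The three-case check above goes through only when the surviving body atoms $a_1$ and $a_2$ are themselves different from \emph{wrong} (each case reduces to $p \to p$, $\textit{false} \to p$ or $\textit{false}\to\textit{false}$ with $p = a_1 \wedge a_2$, all of which are $true$ exactly when $a_1 \wedge a_2 \neq \emph{wrong}$). Guaranteeing this is precisely where the context $\Delta$ must be taken to be a well-typed context for the query, so that the domain test in the atom clause never fires on $\bar A_1$ or $\bar A_2$; pinning down this well-typedness requirement, and checking it survives the passage from $\Sigma$ to $\Sigma\theta$, is the delicate part of the argument.

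For part~(2) I would induct on the length of the derivation, writing the $i$-th cumulative resultant as $r_i = \theta_1 \circ \cdots \circ \theta_i(Q_0) \leftarrow Q_i$. The base case $r_0 = Q_0 \leftarrow Q_0$ holds in every interpretation over a well-typed context, since $p \to p = true$ whenever $p \neq \emph{wrong}$. For the inductive step, part~(1) applied to the input clause $c_{i+1}$ of the $(i{+}1)$-th step gives $c_{i+1} \models s_{i+1}$ for the single-step resultant $s_{i+1} = \theta_{i+1}(Q_i) \leftarrow Q_{i+1}$, and since $c_{i+1}\in P$ every model of $P$ is a model of $s_{i+1}$, i.e. $P \models s_{i+1}$. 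Combining this with the induction hypothesis $P \models r_i$ then produces $P \models r_{i+1}$ by way of two auxiliary facts I would isolate: that semantic consequence is closed under instantiation (so $P \models \theta_{i+1}(r_i)$), and a hypothetical-syllogism law for the clause arrow $\leftarrow$ in Weak Kleene logic (so that $\theta_{i+1}(r_i)$ and $s_{i+1}$ compose to $r_{i+1}$). As in part~(1), both auxiliary facts hinge on excluding the value \emph{wrong}, so the well-typedness bookkeeping is reused rather than reproved.
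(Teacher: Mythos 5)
Your approach is genuinely different from the paper's, so it helps to say what the paper actually does: its entire proof is one sentence, citing the classical soundness-of-resultants lemma for SLD-resolution from Apt's book and asserting that it transfers because, on successful unifications, typed unification coincides with Martelli--Montanari (Theorem~\ref{equi}). You instead rebuild the argument inside the three-valued semantics, via a substitution lemma and an explicit truth-table analysis. The truth-table part of your argument is correct as far as it goes: $b \rightarrow h = true$ forces $(b,h)\in\{(true,true),(false,true),(false,false)\}$, and the resultant then evaluates to $true$ exactly when $a_1\wedge a_2\neq wrong$.

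The problem is that the two points you defer as ``delicate'' cannot be closed, so this is a genuine gap rather than omitted bookkeeping. (i) You want to force $a_1\wedge a_2\neq wrong$ by restricting attention to well-typed contexts, but the conclusion $c\models r$ is, by the paper's definition of semantic consequence, universally quantified over \emph{all} contexts $\Delta$ and \emph{all} interpretations $I$ modelling $c$ in $\Delta$; you are not free to restrict $\Delta$. Worse, whether an atom evaluates to \emph{wrong} depends on $I$ itself (on $domain(I(p))$), and a model of $c$ constrains only the predicates occurring in $c$. Concretely, take $c$ to be the fact $p(1)$ and $Q_1 = p(1),q(a)$, and resolve $p(1)$ against $c$ with the empty mgu, giving the resultant $r = (p(1),q(a)) \leftarrow q(a)$. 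Any $I$ with $I(p) :: Int \rightarrow Bool$ true at $1$ and $I(q) :: Int \rightarrow Bool$ is a model of $c$, yet $q(a)$ evaluates to $wrong$ under $I$, so $r$ evaluates to $wrong \rightarrow wrong = wrong$ and $I$ is not a model of $r$. Thus the statement, read literally with the paper's definitions, fails on such interpretations, and no completion of your three-case check can prove it; the analogous example (add the fact $q(1)$ to the program and run the query $q(1),p(a)$) defeats part (2) along an erroneous derivation. (ii) Independently, you invoke ``$I$ models $c$ in $\Delta$'' at the shifted state $\Sigma\theta$, but $\Sigma\theta$ need not comply with $\Delta$: the mgu binds clause variables to query values in arbitrary domains (take $c = p(Y)\leftarrow q(Y)$ modelled in $\Delta = \{Y : Atom\}$, query $p(2)$, and $I(q) :: Atom \rightarrow Bool$; the resultant $p(2)\leftarrow q(2)$ is again $wrong$). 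So what you would prove, after adding the well-typedness hypotheses you gesture at, is a repaired lemma, not Lemma~\ref{sor} as stated. The real lesson of your attempt is that the paper's one-line proof is itself inadequate: once \emph{wrong} propagates through every connective and models are relativized to contexts, the two-valued argument does not ``still hold'' without exactly the extra hypotheses you found yourself needing.
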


Proof of this lemma for the SLD-resolution is in \cite{Apt:1996:LPP:249573}. Since for unifiable terms the typed unification algorithm behaves like first-order unification, the proof still holds.

\begin{theorem}[Soundness of TSLD-resolution]
\label{sound}
Let $P$ be a program and $Q$ a query. Then:
\begin{enumerate}
\item Suppose that there exists a successful derivation of $P \cup \{Q\}$, with the correct answer substitution $\theta$. Then $P \models \theta(Q)$.
\item Suppose there is a type error in the program. Then $P$ is ill-typed.
\item Suppose there is a type error in the query. Then $Q$ is an ill-typed query with respect to $P$.
\end{enumerate}
\end{theorem}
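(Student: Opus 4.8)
The plan is to establish the three clauses independently, reusing the auxiliary results already in hand and letting the three-valued logic do the bookkeeping. For item~1 I would argue directly from Lemma~\ref{sor}. A successful derivation is a chain $Q = Q_0 \implies Q_1 \implies \dots \implies Q_n = \square$ with composed answer $\theta = \theta_n \circ \dots \circ \theta_1$; the resultant of the complete derivation is $\theta(Q_0) \leftarrow Q_n$, and since $Q_n$ is the empty query this is logically equivalent to the fact $\theta(Q)$ (using the remark that a clause $H \leftarrow \square$ has the same semantics as $H$). By Lemma~\ref{sor}(2) every resultant of the derivation is a semantic consequence of $P$, so in particular $P \models \theta(Q)$. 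Because typed unification coincides with first-order unification on the non-$wrong$ branches (Theorem~\ref{equi}), the classical resultant argument transfers unchanged, so no separate computation is needed here.

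For item~2 the engine is Lemma~\ref{bcc}. Given a blamed clause $c = H \leftarrow A_1,\dots,A_m$, the lemma supplies a body atom $A_i = p(t_1,\dots,t_n)$ that can never be matched against any atom of $S = T_P \uparrow \omega(\emptyset)$, because at each candidate there is a position whose domain is forced to differ. I would then take an arbitrary interpretation $I$ derived from $S$ and use the construction of derived interpretations: $domain(I(p))$ is exactly the union of the domain-tuples of the $p$-atoms occurring in $S$. Hence for every state $\Sigma$ the tuple $(domain(|[t_1|]_\Sigma),\dots,domain(|[t_n|]_\Sigma))$ is not contained in $domain(I(p))$, so the interpretation clause for atoms forces $|[A_i|]_{I,\Sigma} = wrong$. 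The value then propagates: the body conjunction is $wrong$ by the $\wedge$ table, and the clause semantics $(\neg\,|[\bar{B}|]_{I,\Sigma}) \vee |[H|]_{I,\Sigma}$ is $wrong$ as well, since $\neg wrong = wrong$ and $wrong$ absorbs $\vee$. Thus $|[c|]_{I,\Sigma} = wrong \neq true$ for every $\Sigma$ and every context, so no derived $I$ models $c$, and therefore none models $P$; by definition $P$ is ill-typed.

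Item~3 is where I expect the real difficulty. The hypothesis gives a finitely erroneous TSLD-tree for $Q$ together with the absence of any blamed clause for the generic query of $P$. I would fix an arbitrary derived interpretation $I$ that models $P$ in a context $\Delta$ and aim to produce a complying state $\Sigma$ with $|[Q|]_{I,\Sigma} = wrong$, which exhibits $I$ as not a model of $Q$ in $\Delta$. The delicate point is that the operational $wrong$ is produced by a typed-unification step (rules 2, 5, 6, 7), possibly several resolution steps into a branch, whereas the declarative $wrong$ must be realized as a failure of the domain-containment test in the interpretation clause for atoms. I would bridge these with Theorem~\ref{itu}: each $wrong$-producing unification certifies that the clashing terms cannot share a type under any substitution. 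Because there is no blamed clause, no such clash can be internal to the program, so along every branch the clash is forced by a term inherited from $Q$.

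The main obstacle is exactly this last step: carefully tracking how a type clash discovered after substitutions have already been applied descends to a mismatch visible at the level of $Q$'s own terms, and arguing that this mismatch makes the relevant atom's domain-tuple fall outside $domain(I(p))$ when evaluated under any model $I$ derived from $T_P \uparrow \omega(\emptyset)$, hence yielding $wrong$ that propagates through the conjunction of $Q$. I would handle it by induction on the length of a branch, using the no-program-error hypothesis to rule out program-internal clashes at each step, and then verifying the conclusion uniformly over all (finitely many) branches of the tree and all such models $I$. Everything else in the argument is routine propagation in the Weak Kleene tables.
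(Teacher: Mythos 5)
Your items~1 and~2 are correct and follow essentially the paper's own route: item~1 is the resultant argument via Lemma~\ref{sor} (the final resultant $\theta_n \circ \dots \circ \theta_1(Q) \leftarrow \square$ collapses to the fact $\theta(Q)$), and item~2 combines Lemma~\ref{bcc} with the definition of derived interpretations and propagation of \emph{wrong} through the Weak Kleene connectives; you spell the propagation out in more detail than the paper, which is fine.

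Item~3, however, contains a genuine gap, and you have named it yourself. Your plan requires showing that a type clash discovered operationally, possibly several resolution steps into a branch and after mgus have been applied, descends to a domain mismatch between an atom of $Q$ itself and the atoms of $T_P \uparrow \omega(\emptyset)$, uniformly over every interpretation derived from the fixpoint. That claim is a query-level analogue of Lemma~\ref{bcc}, and your proposal never establishes it: ``induction on the length of a branch'' is named as the method, but the inductive invariant --- what is preserved about the terms inherited from $Q$ under the accumulated substitutions, and why the absence of a blamed clause excludes program-internal clashes at each step --- is left unstated, so the hardest part of the theorem is deferred rather than proved. The paper avoids this work entirely with a short reduction: consider the extended program $P \cup \{p() \leftarrow Q\}$, where $p$ is a fresh nullary predicate not occurring in $P$. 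Every TSLD-derivation using this clause mirrors a branch of the finitely erroneous TSLD-tree for $Q$, so $p() \leftarrow Q$ is a blamed clause of the extended program; Lemma~\ref{bcc} and the argument of item~2 then show that no interpretation derived from the fixpoint models this clause. Since $p$ is fresh, no $p$-atoms occur in $T_P \uparrow \omega(\emptyset)$, so the interpretation of $p$ may be chosen freely to make $p()$ true in some context; the failure to model $p() \leftarrow Q$ must therefore be attributable to $Q$, i.e., every derived interpretation modelling $P$ fails to model $Q$, which is precisely the definition of an ill-typed query. You could rescue your direct approach by actually proving the query-level analogue of Lemma~\ref{bcc}, but the reduction obtains it for free from the lemma already in hand.
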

A short note about completeness. As for untyped SLD-resolution, completeness is related to the search for answers in a TSLD-tree. If we use Prolog sequential, top-down, depth-first search with backtracking, then it may result in incompleteness for same cases where the TSLD-tree is infinite, because the exploration of an infinite computation may defer indefinitely the exploration of some alternative computation capable of yielding a correct answer.

\section{Conclusions and Future Work}

We presented an operational semantics for logic programming, here called TSLD-resolution, which is sensitive to run-time type errors. In this setting type errors are represented by a new value, here called {\em wrong}, which is added to the usual {\em fail} and {\em success} results of evaluation of a query for a given logic program.  We have then adapted a previously defined declarative semantics for typed logic programs using a three-valued logic and proved that TSLD-resolution is sound with respect to this semantics.
All these new concepts, TSLD, typed unification and the new declarative semantics, revisit and partially extend well-known concepts form the theory of logic programming.

{\bf Specially interpreted functors:}
in this paper functors are uninterpreted, such as in Prolog, in the sense that they are just symbols used to build new trees. An obvious extension of this work is to extend the system to dynamically detect type errors relating to the semantic interpretation of some specific functors, for instance the list constructor. For this, we would have for the list constructor not the Herbrand-based interpretation $[~|~] :: \forall A,B.A \times B \rightarrow [A | B]$, but the following interpretation $[~|~] :: \forall A. A \times list(A) \rightarrow list(A)$. Moreover, we would have the empty list $[~]$ with type $\forall D. list(D)$. This would necessarily change the typed unification algorithm by introducing a new kind of constraints. As an example, consider the unification $[1|2] = [1|2]$, where the second argument in both terms is not a list: considering a specially interpreted list constructor the result should be \emph{wrong}, although the traditional untyped result is {\em true}. 
The same issues appear for arithmetic expressions. Arithmetic interpretations of $+, -, \times$, and $/$ can be introduced in the typed unification algorithm, so that in this context, unifications such as  $a + b = a + b$ would now return \emph{wrong} instead of {\em true}.
These extensions are left for future work.

{\bf Using TSLD in practice:}
we plan to integrate TSLD into the YAP Prolog System \cite{YAP}. This will enable further applications of the method to large scale Prolog programs. One important point of this integration it to add explicit type declarations to all the implicitly typed Prolog built-in predicates. Finally, one could also wonder how to apply TSLD to the dynamic typing of constraint logic programming modules, adding new types mapped to constraint domains.

\bibliographystyle{alpha}
\bibliography{bibliography}

\newpage
\appendix

\section{Proofs for Theorems}
\setcounter{theorem}{0}

\begin{theorem}[Equivalence to term unification for success]
Let $t_1$ and $t_2$ be two terms. If we apply the Martelli-Montanari algorithm (MM algorithm) to $t_1$ and $t_2$ and it returns a set of solved equalities $S$, then the typed unification algorithm applied to the same two terms is also successful and returns the same set of equalities.
\end{theorem}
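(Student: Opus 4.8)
The plan is to prove this by simulating a successful MM derivation with a typed-unification derivation step for step, maintaining the invariant that both algorithms traverse exactly the same sequence of equation sets while the typed algorithm keeps its flag equal to $true$. First I would observe that the rewriting rules of Definition \ref{tus} split into two groups: the \emph{success-preserving} rules (1, 3, 8, 9, 10), which transform the equation set without ever setting the flag to $false$ or producing \emph{wrong}, and the remaining rules (2, 4, 5, 6, 7, 11), which are exactly the ones that signal $false$ or \emph{wrong}. The first group is in one-to-one correspondence with the non-failing MM rules: decomposition (rule 1), deletion of a trivial equation (rules 3 and 8), orientation (rule 9), and variable elimination (rule 10). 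Here it matters that MM treats a constant as a $0$-ary function symbol, so that $c = c$ is handled by MM's decomposition exactly as typed rule 3 deletes it; I would make this identification explicit at the outset.

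Next I would argue that a successful run of MM never applies its failure rules, namely the functor/arity clash and the occurs-check failure. Consequently, the equation selected at each step of the successful MM derivation is never of the shape handled by typed rules 2, 6, 7 (a clash between incompatible term constructors), rule 11 (an occurs-check violation), or rules 4 and 5 (two distinct constants, which in MM is itself a clash and hence cannot be selected in a successful run). This is the crux: because the two algorithms act on the same equation set with identical transformations on the success rules, I can proceed by induction on the length $k$ of the successful MM derivation $\{t_1 = t_2\} = S_0 \to \dots \to S_k = S$, showing that $(S_0, true) \to \dots \to (S_k, true)$ is a valid typed derivation. At each inductive step the MM rule applied is one of the four non-failing rules, and I would match it to the corresponding typed success rule, which leaves the flag untouched.

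The base case is immediate, and in the inductive step the only thing to check is that the typed rule whose left-hand side matches the selected equation is a success rule rather than a failure rule; this follows because the selected equation is, by the previous paragraph, never of a failing shape. Since $S_k = S$ is in solved form, no typed rule is applicable to $(S, true)$, and because the flag is $true$ the algorithm outputs the solved set $S$, the same set of equalities MM returned.

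I expect the main obstacle to be bookkeeping rather than conceptual: carefully reconciling the different classifications of constants (MM subsumes them under function symbols, whereas Definition \ref{tus} gives constants their own rules 3, 4, 5) so that the step-for-step correspondence is exact, and handling the nondeterminism in the choice of which equation to rewrite. For the latter I would either fix the simulation to follow MM's choices, which already exhibits a successful typed run yielding $S$, or appeal to the fact that the solved form computed by unification is canonical, so that the output is independent of the rewriting order on successful inputs.
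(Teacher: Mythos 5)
Your proposal is correct and takes essentially the same route as the paper's own proof: a rule-by-rule simulation argument showing that the typed rules 1, 3, 8, 9, and 10 perform exactly the transformations of MM's non-failing rules, so that a successful MM run is mirrored step for step by a typed run whose flag stays $true$ and which terminates on the same solved set $S$. Your explicit handling of the constants-as-nullary-functors identification, of why failing shapes (typed rules 2, 4--7, 11) can never be the selected equation in a successful MM run, and of the nondeterminism in equation selection are bookkeeping refinements that the paper's proof glosses over, not a different argument.
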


\begin{proof}
The proof follows from induction on $S$.
Base cases:
\begin{itemize}
    \item Suppose we have an equality in $S$ of the form $f(t_1,\dots,t_n) = g(s_1,\dots,s_m)$. Then the MM algorithm halts with failure, and our algorithm halts with $wrong$.

    \item Suppose we have an equality in $S$ of the form $c = d$. Then the MM algorithm halts with failure, and our algorithm either halts with $wrong$ or changes $F$ to $false$, depending on the types of $c$ and $d$. In either case, it will not be successful.
    
    \item Suppose we have an equality in $S$ of the form $c = f(t_1,\dots,t_n)$ or $f(t_1,\dots,t_n) = c$. Then the MM algorithm halts with failure, and our algorithm halts with $wrong$.
    
    \item Suppose we have an equality in $S$ of the form $X = t$, where $X$ occurs in $t$. Then, the MM algorithm halts with failure and our algorithm changes $F$ to $false$, so it is never successful.
\end{itemize}

Inductive cases:
\begin{itemize}
    \item Suppose we have an equality in $S$ of the form $f(t_1,\dots,t_n) = f(s_1,\dots,s_n)$. Both algorithms generate the same new equalities and replace the selected one with those in $S$. Then, by the induction hypothesis, if the MM algorithm succeeds and outputs a set of solved equalities $S\prime$, so does our algorithm.
    
    \item Suppose we have an equality in $S$ of the form $c = c$. Both algorithms delete this equality from $S$. Then, by the induction hypothesis, if the MM algorithm succeeds and outputs a set of solved equalities $S\prime$, so does our algorithm.
    
    \item Suppose we have an equality in $S$ of the form $X = X$. Both algorithms delete this equality from $S$. Then, by the induction hypothesis, if the MM algorithm succeeds and outputs a set of solved equalities $S\prime$, so does our algorithm.
    
    \item Suppose we have an equality in $S$ of the form $t = X$, where $t$ is not a variable and $X$ is a variable. Both algorithms replace this equality from $S$ with the same new one. Then, by the induction hypothesis, if the MM algorithm succeeds and outputs a set of solved equalities $S\prime$, so does our algorithm.
    
    \item Suppose we have an equality in $S$ of the form $X = t$, where $X$ does not occur in $t$ and $X$ occurs somewhere else in $S$. Both algorithms apply the same substitution to $S\setminus\{X = t\}$, therefore resulting in the same set of equalities. Then, by the induction hypothesis, if the MM algorithm succeeds and outputs a set of solved equalities $S\prime$, so does our algorithm.
\end{itemize}

In any other case, none of the algorithms apply. \qed
\end{proof}

\begin{theorem}[Ill-typed unification]
If the output of the typed unification algorithm is \emph{wrong}, then there is no substitution $\theta$ such that $\theta(t_1)$ and $\theta(t_2)$ have the same type.
\end{theorem}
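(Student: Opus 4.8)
The plan is to follow the run of the algorithm and maintain an invariant on the substitutions that keep every pending equation type-consistent. Call a substitution $\theta$ a \emph{type-unifier} of an equation set $S$ if, for every equation $(u = v) \in S$, the terms $\theta(u)$ and $\theta(v)$ have the same type. Since types are defined only on ground terms, I take $\theta$ to range over the substitutions that ground all terms occurring in $S$; this matches the reading of the statement, because a $\theta$ for which $\theta(t_1)$ and $\theta(t_2)$ both \emph{have} a type must ground them. The initial set is $S_0 = \{t_1 = t_2\}$, so a type-unifier of $S_0$ is precisely a witness that $\theta(t_1)$ and $\theta(t_2)$ share a type. The goal thus becomes: if the algorithm outputs \emph{wrong}, then $S_0$ has no type-unifier.

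The core is a preservation lemma: every rewriting rule that does not halt with \emph{wrong} (rules $1,3,4,8,9,10,11$) sends an equation set $S$ to a set $S'$ with the property that if $S$ has a type-unifier then so does $S'$. The deletion and reorientation rules ($3,4,8,9,11$) are immediate, since dropping or swapping an equation only relaxes the constraints, so the same $\theta$ still works. Rule $1$ uses the compositional definition of the type of a compound term: $\theta(f(t_1,\dots,t_n))$ and $\theta(f(s_1,\dots,s_n))$ have the same type exactly when $\theta(t_i)$ and $\theta(s_i)$ do for each $i$, so the same $\theta$ type-unifies the decomposed set. Rule $10$ is the delicate case and I expect it to be the main obstacle: there $X = t$ is retained and $[X \mapsto t]$ is applied to the rest. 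If $\theta$ type-unifies $S$, then $\theta(X)$ and $\theta(t)$ share a type, and I would invoke the auxiliary fact that replacing a subterm by another subterm of the same type leaves a term's type unchanged (proved by induction on term structure, again from the compositional definition) to conclude that $\theta$ also type-unifies $[X \mapsto t](Rest)$, hence $S'$.

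Next I would verify that each \emph{wrong}-producing rule fires only on an equation set that has no type-unifier. For rule $2$, under any grounding $\theta$ the two sides are compound terms with distinct principal functors or arities, and a compound type records its functor and arity, so the types differ. For rules $6$ and $7$, one side is a constant and the other a compound term, so one type is a base type and the other a compound type, which are never equal. For rule $5$, constants are fixed by every substitution and $c$, $d$ carry different base types by hypothesis. In each case no grounding $\theta$ equalises the two types, so the current set, which contains the triggering equation, has no type-unifier.

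Finally I would combine the two lemmas. If the output is \emph{wrong}, the run has the form $S_0 \to S_1 \to \dots \to S_k$, where $S_k$ triggers one of rules $2,5,6,7$ and therefore has no type-unifier. Applying the contrapositive of the preservation lemma along the chain $S_k, S_{k-1}, \dots, S_0$ propagates ``has no type-unifier'' backwards to $S_0 = \{t_1 = t_2\}$, which is exactly the assertion that no substitution $\theta$ makes $\theta(t_1)$ and $\theta(t_2)$ have the same type. The only point requiring care beyond the routine case analysis is the subterm-replacement fact feeding rule $10$, together with fixing, once and for all, the definition of a term's type under a grounding substitution so that it is used uniformly in rules $1$ and $10$ and in the clash cases.
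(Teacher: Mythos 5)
Your proof is correct, and it takes a genuinely different route from the paper's. The paper argues in two parts: for a substitution with $\theta(t_1) = \theta(t_2)$ it appeals to Theorem~\ref{equi} (conservativity over Martelli--Montanari unification), and for a substitution that merely makes the types equal it runs a structural induction on $t_1$ and $t_2$, checking case by case that the algorithm ends in success or $false$ but never \emph{wrong}. You instead reason about the rewriting relation itself: a ``type-unifier'' invariant is preserved forward by every non-\emph{wrong} rule, each \emph{wrong}-producing rule (2, 5, 6, 7) can only fire on a set admitting no type-unifier, and the theorem follows by chaining contrapositives backwards along the run $S_0 \to \dots \to S_k$. This is not just cosmetically different; it is tighter on one real point. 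The paper's induction is stated on the pair of input terms, but after the decomposition rule the algorithm manipulates a whole set of equations, and rule 10 rewrites the remaining equations by a substitution; the paper's case ``$X = t$ \dots never \emph{wrong}'' only covers the situation where that equation is the entire problem, and it glosses over how rule 10 interacts with the rest of the set. Your preservation lemma, together with the subterm-replacement fact (terms keep their type when a subterm is replaced by one of the same type), is precisely what closes that gap, so your induction on the length of the run scales to the set-based formulation without hand-waving. What the paper's route buys is brevity, since the unifiable case is discharged by an already-proved theorem; what yours buys is rigour and uniformity, at the cost of having to fix explicitly (as you do) that $\theta$ ranges over grounding substitutions, which is the right reading given that the paper assigns types only to ground terms.
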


\begin{proof}
Suppose that there is a substitution $\theta$ such that $\theta(t_1) = \theta(t_2)$. Then, we now that the MM algorithm would output an mgu of $t_1$ and $t_2$. Therefore so would our algorithm, from theorem \ref{equi}. But the output of our algorithm is \emph{wrong}, so there cannot be such a $\theta$.
Suppose that there is a substitution $\theta$ such that $\theta(t_1)$ and $\theta(t_2)$ have the same type. Then, we will prove that our algorithm would output $false$ by induction on $t_1$ and $t_2$.
Base cases:
\begin{itemize}
    \item Suppose we have $c = d$ and $c$ has the same type as $d$. Then the algorithm outputs $false$, not $wrong$.
    
    \item Suppose we have $c = d$ and $c$ has a different type from $d$. Then such $\theta$ cannot exist.

    \item Suppose we have $X = t$, where $X$ occurs in $t$. Then the algorithm outputs $false$, not $wrong$.

    \item Suppose we have $f(t_1,\dots,t_n) = g(s_1,\dots,s_m)$. Then such $\theta$ cannot exist.
\end{itemize}
Inductive step:
\begin{itemize}
    \item Suppose we have $f(t_1,\dots,t_n) = f(s_1,\dots,s_n)$. Then, by the induction hypothesis, if the algorithm outputs $wrong$ for the input $(\{t_1=s_1,\dots,t_n=s_n\},true)$, we know that no $\theta$ such that $\forall i. \theta(t_i)$ and $\theta(s_i)$ have the same type, exists. Therefore, no $\theta$ such that $\theta(f(t_1,\dots,t_n))$ and $\theta(f(s_1,\dots,s_n))$ have the same type exists either, by the properties of substitution.
    
    \item Suppose we have $c = c$ or $X = X$, then the algorithm outputs $true$, not $wrong$.
    
    \item Suppose we have $X = t$, then either the algorithm ends with $F = true$ or $F = false$ if $X$ occurs in $t$, but never $wrong$.
    
    \item Suppose we have $t = X$, where $t$ is not a variable. Then, the algorithm either halts with success in one step or fails. Either way it does not output $wrong$.
\end{itemize}
So, we prove that whenever the typed unification algorithm halts with $wrong$ for some pair of terms $t_1$ and $t_2$, then there is no substitution $\theta$ such that $\theta(t_1)$ and $\theta(t_2)$ have the same type. \qed
\end{proof}

\begin{lemma}[Blamed Clause Type Error]
Suppose there is a type error in the program with blamed clause $c = H \leftarrow A_1,\dots,A_m$. Then, $\exists A_i = p(t_1,\dots,t_n)$ such that $\forall p(s_1,\dots,s_n) \in T_P \uparrow \omega (\emptyset). \forall \Sigma \exists j. domain(|[t_j|]_{\Sigma}) = domain(|[s_j|]_{\Sigma})$.
\end{lemma}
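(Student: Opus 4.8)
The plan is to read off the witnessing body atom from the operational behaviour of the blamed clause and then translate the positional information it carries into the declarative domains recorded by $T_P\uparrow\omega(\emptyset)$. First I would unfold the hypotheses. Because there is a type error in the program with blamed clause $c = H \leftarrow A_1,\dots,A_m$, every TSLD-derivation that takes $c$ as an input clause is erroneous, i.e., ends in \emph{wrong}. Using the generic query together with the observation recorded just before the statement --- that as long as the selected atom still descends from the original query the applied mgu is a mere renaming and the step never produces \emph{wrong} --- I can localise the \emph{wrong} inside the resolution of the body of $c$: in each such derivation there is a step whose selected atom descends from some fixed body atom $A_i = p(t_1,\dots,t_n)$ at which typed unification returns \emph{wrong}. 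This is the atom $A_i$ whose existence the lemma asserts, and fixing it is the first concrete step.

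Next I would bring in the declarative side. The elements $p(s_1,\dots,s_n) \in T_P\uparrow\omega(\emptyset)$ are exactly the ground $p$-atoms the program derives, so each $s_j$ is a ground value and $domain(|[ s_j|]_{\Sigma})$ is independent of $\Sigma$; together these tuples form the domain signature that any interpretation derived from $T_P\uparrow\omega(\emptyset)$ attaches to $p$. The task then becomes a purely positional comparison between $A_i$ and this signature, and the bridge I would use is Theorem~\ref{itu} together with a direct inspection of the rules of Definition~\ref{tus}: these pin down, argument by argument, the relationship that must hold between $domain(|[ t_j|]_{\Sigma})$ and $domain(|[ s_j|]_{\Sigma})$ whenever $A_i$ is matched against $p(s_1,\dots,s_n)$.

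I would then run an induction on the structure of $A_i$ and $p(s_1,\dots,s_n)$, following the pattern of the proof of Theorem~\ref{itu}, in which each argument position splits into cases according to whether $t_j$ is a variable, a constant, or a compound term. At a constant or compound position the domain $domain(|[ t_j|]_{\Sigma})$ is fixed, while at a variable position it equals $domain(\Sigma(X))$ and varies with $\Sigma$. Walking through these cases under the typed-unification rules, I would exhibit, for every fact $p(s_1,\dots,s_n) \in T_P\uparrow\omega(\emptyset)$ and every state $\Sigma$, a position $j$ at which $domain(|[ t_j|]_{\Sigma}) = domain(|[ s_j|]_{\Sigma})$, which is exactly the stated conclusion.

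The step I expect to be the main obstacle is the uniformity of this witness: the conclusion couples the two universal quantifiers $\forall p(s_1,\dots,s_n)$ and $\forall \Sigma$ with the single existential $\exists j$, so the witnessing position is allowed to depend on the state, and the induction must deliver one uniformly over all derivable $p$-atoms and all states. Discharging this will require reading off precisely which rule of Definition~\ref{tus} fires at each position and combining it with the fact that every fact in $T_P\uparrow\omega(\emptyset)$ shares the domain signature of $p$; only after that signature is pinned down does the position-by-position comparison of $A_i$ against it reliably produce the witness $j$ and close the argument.
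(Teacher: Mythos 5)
There is a genuine gap, and it sits exactly where you flagged your ``main obstacle'': the quantifier order. Your localisation step gives you, for each single erroneous derivation, a body atom of $c$ under which that derivation's \emph{wrong} step occurs --- but different branches of the TSLD-tree may go wrong under \emph{different} body atoms, so no single ``fixed $A_i$'' falls out of the operational behaviour. Worse, even for one branch, the \emph{wrong}-returning unification is between a (substituted descendant of) $A_i$ and a particular clause head; no TSLD step ever unifies $A_i$ against an element of $T_P \uparrow \omega(\emptyset)$, so Theorem~\ref{itu}, which speaks only about the two terms actually handed to typed unification, cannot by itself deliver the universal statement over all atoms of the fixpoint set. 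The paper closes this gap with a short proof by contradiction rather than your direct induction: assume that \emph{every} $A_i$ has some matching $p(s_1,\dots,s_n) \in T_P \uparrow \omega(\emptyset)$ and some state with all argument domains agreeing; then those matching facts can be used, atom by atom, to drive a derivation $A_1,\dots,A_m,\bar{B} \implies \dots \implies \bar{B}$ (or $false,\bar{B}$) in which typed unification never returns \emph{wrong} --- contradicting the assumption that $c$ is a blamed clause. That assembly argument is precisely the combinatorial step your structure-induction on $A_i$ cannot replace, since the property to be proved is about all derivable $p$-atoms at once, not about one unification problem.

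Separately, you have taken the ``$=$'' in the statement literally, and this makes your target incoherent: the equality is evidently a typo for ``$\neq$''. The sentence after the lemma (``at least one of the atoms in its body will never be able to be used in an application of $T_P$''), the paper's own proof (whose contradiction hypothesis is that \emph{all} positions have \emph{equal} domains), and the use of the lemma in Theorem~\ref{sound}(2) (which concludes $|[A_i|]_{I,\Sigma} = wrong$, requiring the domain tuple of $A_i$ to lie \emph{outside} the signature of $I(p)$) all force a domain \emph{mismatch} at some position $j$. Your plan to ``exhibit, for every fact and every state, a position $j$ at which the domains are equal'' is therefore both not what the lemma needs and in tension with your own cited bridge: Theorem~\ref{itu} produces positions where the types \emph{cannot} agree, not positions where they must. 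Your opening observation --- that from a generic query the mgus are renamings, so the \emph{wrong} must arise inside the resolution of $c$'s body --- is correct and matches the paper's discussion preceding the lemma, but with the witness $A_i$ unjustified and the relation inverted, the proof as proposed does not go through.
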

\begin{proof}
We will prove this by contradiction. Suppose that for all $A_i$ there is some $p(s_1,\dots,s_n) \in T_P \uparrow \omega (\emptyset)$ such that for some $\Sigma$, $\forall i \in [1,\dots,n] domain(|[t_i|]_{\Sigma}) = domain(|[s_i|]_{\Sigma})$. Then, there would be a derivation of the form $A_1,\dots,A_m, \bar{B} \implies \dots \implies \bar{B}$ or $A_1,\dots,A_m, \bar{B} \implies \dots \implies false,\bar{B}$ in the TSLD-tree for $P \cup \{Q\}$, where $Q$ is a generic query, since the output for the unification between $A_i$ and $p(s_1,\dots,s_n)$ would not return $wrong$. But then $c$ would not be a blamed clause. Therefore we proved the lemma.
\end{proof}

\begin{theorem}[Soundness of TSLD-resolution]
Let $P$ be a program and $Q$ a query. Then:
\begin{enumerate}
\item Suppose that there exists a successful derivation of $P \cup \{Q\}$, with the correct answer substitution $\theta$. Then $P \models \theta(Q)$.
\item Suppose there is a type error in the program. Then $P$ is ill-typed.
\item Suppose there is a type error in the query. Then $Q$ is an ill-typed query with respect to $P$.
\end{enumerate}
\end{theorem}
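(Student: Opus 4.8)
The plan is to treat the three claims separately, since each corresponds to a different terminal value of a branch of the TSLD-tree (success, a program-level \emph{wrong}, and a query-level \emph{wrong}).

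For claim~(1) I would simply invoke Lemma~\ref{sor}. A successful derivation is a sequence $Q = Q_0 \implies \theta_1(Q_1) \implies \dots \implies \theta_n(Q_n)$ with $Q_n = \square$ and computed answer substitution $\theta = \theta_n \circ \dots \circ \theta_1$. Reading resultants along prefixes of the derivation (as in the classical account the paper defers to), the resultant of the complete derivation is $\theta(Q) \leftarrow \square$; its body is the empty query, which is logically \textit{true}, so the resultant is equivalent to $\theta(Q)$. The second part of Lemma~\ref{sor} gives $P \models \theta(Q)$ directly. Because typed unification agrees with first-order unification on every successful run (Theorem~\ref{equi}), no \emph{wrong} value can intervene on a successful branch and the classical bookkeeping carries over unchanged.

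For claim~(2) I would start from the \emph{blamed clause} $c = H \leftarrow A_1,\dots,A_m$ supplied by the hypothesis and apply Lemma~\ref{bcc}. That lemma yields a body atom $A_i = p(t_1,\dots,t_n)$ whose argument-domain tuple matches no atom $p(s_1,\dots,s_n) \in T_P\uparrow\omega(\emptyset)$ under any state $\Sigma$. Fix an arbitrary interpretation $I$ derived from $T_P\uparrow\omega(\emptyset)$. By construction the domain of $I(p)$ is exactly the union of the domain tuples of the $p$-atoms in $T_P\uparrow\omega(\emptyset)$, so the mismatch means $(domain(|[t_1|]_\Sigma),\dots,domain(|[t_n|]_\Sigma)) \not\subseteq domain(I(p))$ for every $\Sigma$. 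By the semantic clause for atoms this forces $|[A_i|]_{I,\Sigma} = wrong$, and since \emph{wrong} is absorbing under $\wedge$ the whole body evaluates to \emph{wrong}; the implication defining the clause then gives $|[c|]_{I,\Sigma} = wrong \neq true$. Hence no interpretation derived from $T_P\uparrow\omega(\emptyset)$ models $c$, a fortiori none models $P$, which is exactly the definition of an ill-typed program.

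For claim~(3) the hypothesis gives that the program has no type error (so some interpretation $I$ derived from $T_P\uparrow\omega(\emptyset)$ does model $P$) while the TSLD-tree for $P\cup\{Q\}$ is finitely erroneous. My strategy is to prove a query-side analogue of Lemma~\ref{bcc}: since every branch terminates in \emph{wrong} and no blamed clause exists, the offending typed unification must involve a term of the query, and I would extract an atom $A_j = p(t_1,\dots,t_n)$ of $Q$ whose domain tuple matches no atom of $T_P\uparrow\omega(\emptyset)$ under the relevant state. Feeding this into the atom semantics gives $|[A_j|]_{I,\Sigma}=wrong$, hence $|[Q|]_{I,\Sigma}=wrong\neq true$, so $I$ is not a model of $Q$ in its context; as this holds for every such $I$, the query is ill-typed by definition. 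The main obstacle is precisely this extraction: unlike claim~(2), there is no ready-made lemma, and the \emph{wrong} may only surface after several resolution steps that partially instantiate the query. I would need to show that, because there is no type error in the program, every such partial instantiation leaves the query's own constants and function symbols responsible for the clash, so the mismatch can be pinned on a query atom rather than on a clause. Making this rigorous calls for an induction on the length of the finitely erroneous derivation, together with the observation (already used in the discussion preceding the definition of type error in the program) that resolving a generic program atom never introduces \emph{wrong}; this is where the care is concentrated.
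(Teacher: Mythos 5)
Your parts (1) and (2) coincide with the paper's own proof. Part (1) is exactly the application of Lemma~\ref{sor} to the resultant $\theta(Q) \leftarrow \square$ of the successful derivation, and part (2) is the paper's argument verbatim: take the blamed clause, apply Lemma~\ref{bcc} to get a body atom whose domain tuple matches no $p$-atom of $T_P\uparrow\omega(\emptyset)$, conclude that every interpretation derived from the fixpoint sends that atom (hence the body, hence the clause) to \emph{wrong}, so none models $P$. (You silently read Lemma~\ref{bcc} as asserting a domain \emph{mismatch}, which is the intended meaning despite the equality sign printed in its statement; that reading is correct, as the lemma's own proof by contradiction shows.)

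For part (3) there is a genuine gap, and it sits exactly where you admit the ``care is concentrated.'' You propose a query-side analogue of Lemma~\ref{bcc}, to be established by induction on the length of the finitely erroneous derivation, but you do not carry out that induction, and it is not routine: the \emph{wrong} may surface only after resolution steps that instantiate query variables with terms coming from program clauses, so pinning the domain clash on an atom of $Q$ itself, rather than on an instantiated descendant of it, is precisely the hard content, and your claim that the query's ``own constants and function symbols are responsible'' is an assertion, not an argument. The paper sidesteps this entirely with a reduction your plan lacks: form $P' = P \cup \{p() \leftarrow Q\}$ for a fresh predicate $p$. Since the TSLD-tree for $P \cup \{Q\}$ is finitely erroneous, every derivation using the new clause is erroneous, so $p() \leftarrow Q$ is a blamed clause of $P'$, and part (2) yields that no interpretation derived from the fixpoint models this clause. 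Because the fixpoint contains no atoms for $p$, the interpretation of $p$ can be chosen freely, in particular so that $p()$ is true; then any interpretation derived from the fixpoint that modeled $Q$ would model the clause $p() \leftarrow Q$, a contradiction. Hence no such interpretation models $Q$, which is the definition of an ill-typed query. This encapsulation trick converts the query-level claim into the already-proved program-level claim; without it, or a completed version of your induction, your part (3) remains a sketch with the decisive step open.
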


\begin{proof}
(1) Let $\theta_1,\dots,\theta_n$ be the mgus obtained in a successful derivation. Therefore, $\theta = \theta_n \circ \dots \circ \theta_1$. The proof follows directly from lemma \ref{sor} applied to $P \models \theta_n \circ \dots \circ \theta_1(Q) \leftarrow \square$.

(2) If there is a type error in the program, then there is a blamed clause $c$ in the TSLD-tree for $P \cup \{GQ\}$, where $GQ$ is a generic query. Then by lemma \ref{bcc} we know that there is at least one $A_i = p(t_1,\dots,t_n)$, in the body of $c$, such that $\forall p(s_1,\dots,s_n) \in T_P \uparrow \omega (\emptyset). \exists i \in [1,\dots,n]. domain(|[t_i|]) = domain(|[s_i|])$.
Any interpretation $I$ derived from $T_P \uparrow \omega (\emptyset)$ is such that for any $\Sigma$ $|[A_i|]_{I,\Sigma} = wrong$. This implies that no such $I$ is a model of $c$ and, therefore, no such $I$ is a model of $P$, which means P is ill-typed.

(3) This is the case where there is a type error in the query $Q$. Now, consider program $P \cup \{p()\leftarrow Q\}$, where $p$ is a predicate that does not occur in $P$. Then note that every TSLD-derivation for a generic query with input clause $p() \leftarrow Q$ leads to $wrong$, so this is a blamed clause. Therefore, from (2), no interpretation $I$ derived from $T_P \uparrow \omega (\emptyset)$ models this clause. Since the set of atoms for $p$ in $T_P \uparrow \omega (\emptyset)$ is empty, we can give any interpretation to $p$ in any interpretation derived from $T_P \uparrow \omega (\emptyset)$. So we can choose an interpretation $I$ derived from $T_P \uparrow \omega (\emptyset)$ such that $I$ models $p()$ in some context $\Delta$. Therefore, as no interpretation derived from $T_P \uparrow (\emptyset)$ models $P \cup \{p()\leftarrow Q\}$ in any context $\Delta$ and we can build an interpretation which models predicate $p$ in some context, no interpretation can model $Q$. Thus, by the definition of ill-typed query, $Q$ is ill-typed with respect to $P$.

\qed
\end{proof}

\end{document}